\documentclass[sigconf,nonacm]{acmart}
\usepackage{booktabs}
\usepackage{amsmath}%
\usepackage{xcolor}
\usepackage{colortbl}
\usepackage{tabularx}
\usepackage{tikz}
\usetikzlibrary{arrows.meta,positioning,calc,fit,shapes.misc,shapes.geometric}
\usepackage[linesnumbered,ruled,vlined]{algorithm2e}
\newcommand{\Units}{\mathcal{U}}
\newcommand{\Ctx}{C}
\newcommand{\Req}{\mathrm{Req}}

\newcommand{\Exc}{\mathrm{Exc}}
\newcommand{\Rq}{\mathcal{R}(q)}
\newcommand{\Rules}{\mathcal{E}}
\newcommand{\Atoms}{\mathcal{V}}
\newcommand{\Cl}{\mathrm{Cl}}
\newcommand{\facts}{\mathrm{facts}}
\newcommand{\rules}{\mathrm{rules}}
\newcommand{\suff}{\mathrm{suff}}
\newcommand{\Target}{Q}
\newcommand{\Proj}{\Pi}
\newcommand{\Sec}{\mathrm{S}_2}
\newcommand{\method}{\textsc{Certify}}
\newcommand{\ours}{\rowcolor{black!8}}%

\newcommand{\nAWBPooledSrBigLabDDec}{+2.2}

\newcommand{\nAWSPooledSrLabDDec}{+0.7}

\newcommand{\nAbMemPooledPoolD}{+11.4}

\newcommand{\nAbMemPooledRagD}{+16.8}

\newcommand{\nAbPooledClosedDec}{35.3}

\newcommand{\nAbPooledPoolD}{+11.5}

\newcommand{\nAsDSPooledCert}{89.5}

\newcommand{\nAsDSPooledPool}{73.5}

\newcommand{\nAsPooledCert}{80.0}

\newcommand{\nAsPooledPool}{56.7}

\newcommand{\nAsRPooledCert}{82.9}

\newcommand{\nAsRPooledPool}{69.8}

\newcommand{\nAudMU}{4.6}

\newcommand{\nAudTW}{2.6}

\newcommand{\nAudVAncTW}{96.7}

\newcommand{\nAudVUnTW}{20.0}

\newcommand{\nAudVerTWN}{88}

\newcommand{\nBTTWInfJudge}{+13.6}

\newcommand{\nBTTWInfPairs}{70}

\newcommand{\nBelowHalfTW}{nine}
\newcommand{\nBudHODec}{71.5}
\newcommand{\nBudHOLinks}{9.3}
\newcommand{\nBudHOLinksDef}{12.9}

\newcommand{\nBudMUDec}{60.4}
\newcommand{\nBudMULinks}{11.1}
\newcommand{\nBudMULinksDef}{20.0}

\newcommand{\nBudTWDec}{79.5}
\newcommand{\nBudTWLinks}{10.7}
\newcommand{\nBudTWLinksDef}{13.2}

\newcommand{\nCaHODropped}{7}

\newcommand{\nCaLab}{+4.6}
\newcommand{\nCaMUDropped}{26}

\newcommand{\nCaPubMax}{+42.9}

\newcommand{\nCaPubMin}{+12.0}
\newcommand{\nCaSrLab}{$-$0.8}
\newcommand{\nCaTWDropped}{8}

\newcommand{\nCauFireMU}{62.1}
\newcommand{\nCauFireTW}{88.0}

\newcommand{\nCauHoneMU}{52.8}
\newcommand{\nCauHoneTW}{95.1}

\newcommand{\nDjHOCert}{96.6}
\newcommand{\nDjMUCert}{96.0}
\newcommand{\nDjMax}{99.5}
\newcommand{\nDjMin}{90.3}
\newcommand{\nDjPubMin}{+12.0}
\newcommand{\nDjSrLabD}{$-$0.2}
\newcommand{\nDjSrSmD}{+4.9}
\newcommand{\nDjStwogGemD}{+3.3}
\newcommand{\nDjTWCert}{96.8}

\newcommand{\nDvOneShift}{$-$1.2}

\newcommand{\nDvTwoShift}{$-$1.7}
\newcommand{\nDvTwoSrLabD}{$-$2.7}

\newcommand{\nELBigDeepDec}{53.7}

\newcommand{\nELBigOneDec}{62.8}

\newcommand{\nFlHOAud}{23}

\newcommand{\nFlHONeither}{7}
\newcommand{\nFlHOOrig}{22}

\newcommand{\nFlHOTwin}{59}

\newcommand{\nFlMUAud}{17}

\newcommand{\nFlMUNeither}{3}
\newcommand{\nFlMUOrig}{31}

\newcommand{\nFlMUTwin}{52}

\newcommand{\nFlTWAud}{12}

\newcommand{\nFlTWNeither}{1}
\newcommand{\nFlTWOrig}{14}

\newcommand{\nFlTWTwin}{77}

\newcommand{\nHOPairs}{400}

\newcommand{\nHqRestMax}{42.0}
\newcommand{\nHqRestMin}{14.8}

\newcommand{\nHuAnc}{97.8}

\newcommand{\nHuDs}{89.6}
\newcommand{\nHuDsK}{0.74}
\newcommand{\nHuGem}{85.2}
\newcommand{\nHuGemK}{0.65}
\newcommand{\nHuGemNo}{16}
\newcommand{\nHuGemYes}{4}
\newcommand{\nHuLeakMU}{2.3}
\newcommand{\nHuLeakTW}{0.3}
\newcommand{\nHuLinks}{135}
\newcommand{\nHuOth}{50.0}

\newcommand{\nHuTwins}{100}
\newcommand{\nHuVer}{76.1}

\newcommand{\nHxCases}{13}

\newcommand{\nHxCertSAcc}{74.8}
\newcommand{\nHxCertTwin}{5.8}

\newcommand{\nHxMax}{42.0}
\newcommand{\nHxMin}{5.1}
\newcommand{\nHxN}{429}

\newcommand{\nHxShare}{29}
\newcommand{\nHxSrSAcc}{52.7}
\newcommand{\nHxSrTwin}{79.0}

\newcommand{\nHyPooledD}{+0.3}

\newcommand{\nIterStopMax}{99.5}
\newcommand{\nIterStopMin}{94.5}

\newcommand{\nLeakMU}{19.1}
\newcommand{\nLeakTW}{6.8}

\newcommand{\nLocChance}{48.2}
\newcommand{\nLocHit}{87.6}
\newcommand{\nLocShare}{59}

\newcommand{\nMUAnvAcc}{30.8}

\newcommand{\nMUAnvCalls}{22}

\newcommand{\nMUAnvSAcc}{74.5}

\newcommand{\nMUAnvTwin}{7.6}
\newcommand{\nMUAnvTwinGold}{2.1}

\newcommand{\nMUCertDec}{57.3}

\newcommand{\nMUCertTwin}{9.4}

\newcommand{\nMUIterAcc}{34.6}

\newcommand{\nMULenientAcc}{34.1}

\newcommand{\nMULenientTwin}{13.2}

\newcommand{\nMULinksAnvDec}{61.0}

\newcommand{\nMUOracleDec}{55.9}

\newcommand{\nMUSameDec}{49.2}

\newcommand{\nMUSrAcc}{47.2}

\newcommand{\nNdA}{74.0}
\newcommand{\nNdB}{59.3}
\newcommand{\nNdC}{21.8}

\newcommand{\nNdN}{661}

\newcommand{\nPrevGpMax}{83}
\newcommand{\nPrevGpMin}{47}

\newcommand{\nPrevVHOSr}{12}

\newcommand{\nPrevVMUSr}{19}

\newcommand{\nPrevVTWSr}{1}

\newcommand{\nPwMiss}{20.6}
\newcommand{\nPwMissDeep}{36.1}

\newcommand{\nPwRandMax}{20.5}
\newcommand{\nPwRandMin}{19.8}
\newcommand{\nQtwoD}{+3.0}

\newcommand{\nRbMUDec}{60.6}

\newcommand{\nRbTWDec}{79.2}

\newcommand{\nSaDSPooledDDec}{+0.5}

\newcommand{\nSaPooledDDec}{$-$0.8}

\newcommand{\nSaRPooledDDec}{$-$0.7}

\newcommand{\nSelPooled}{68.4}
\newcommand{\nSelWeb}{52.4}

\newcommand{\nSrMemHOD}{+31.7}
\newcommand{\nSrMemHOKnown}{28}
\newcommand{\nSrMemMUD}{+28.1}
\newcommand{\nSrMemMUKnown}{15}
\newcommand{\nSrMemTWD}{+40.0}
\newcommand{\nSrMemTWKnown}{53}

\newcommand{\nSrgRefPooledD}{+28.1}

\newcommand{\nSwClosN}{11}
\newcommand{\nSwClosOk}{10}
\newcommand{\nSwClosTwin}{9}

\newcommand{\nSwDisMax}{6.5}
\newcommand{\nSwDisMin}{0.5}

\newcommand{\nSwOrigOk}{3}
\newcommand{\nSwOrigRest}{27}

\newcommand{\nSwTwinAns}{14}

\newcommand{\nTWAnvAcc}{73.0}

\newcommand{\nTWAnvCalls}{17}

\newcommand{\nTWAnvSAcc}{85.8}

\newcommand{\nTWAnvTwin}{8.2}
\newcommand{\nTWAnvTwinGold}{1.6}

\newcommand{\nTWCertDec}{68.7}

\newcommand{\nTWCertTwin}{13.6}

\newcommand{\nTWLenientTwin}{20.6}

\newcommand{\nTWLinksAnvDec}{75.8}

\newcommand{\nTWOracleDec}{68.0}

\newcommand{\nTWSameDec}{52.9}

\newcommand{\nTWSrCalls}{4}

\newcommand{\nTableTwoPub}{eleven}

\newcommand{\nVDSPooledSrLabDDec}{+5.0}

\newcommand{\nVHOLinksAnvDDec}{+0.4}

\newcommand{\nVMULenientDDec}{+1.1}

\newcommand{\nVMULinksAnvDDec}{$-$0.6}

\newcommand{\nVMUPipeSmallDDec}{+6.8}

\newcommand{\nVMUStwogAllDDec}{+11.5}

\newcommand{\nVPooledHaltADDec}{+19.4}

\newcommand{\nVPooledLenientDDec}{+2.7}

\newcommand{\nVPooledPipeGemDDec}{+4.3}

\newcommand{\nVPooledSrLabDDec}{$-$1.1}

\newcommand{\nVPooledSrLabSmallDDec}{+4.2}

\newcommand{\nVPooledStwogDDec}{+11.4}
\newcommand{\nVPooledStwogGemDDec}{+3.0}

\newcommand{\nVRPooledSrBigLabDDec}{+0.6}

\newcommand{\nVTWLenientDDec}{+5.3}

\newcommand{\nVTWLinksAnvDDec}{+2.9}

\newcommand{\nVTWPipeSmallDDec}{+15.2}

\newcommand{\nVTWStwogAllDDec}{+24.1}

\newcommand{\nVTWStwogGemDDec}{+9.5}

\newcommand{\nWAudHO}{29.5}
\newcommand{\nWAudMU}{52.1}
\newcommand{\nWAudTW}{20.6}

\newcommand{\nWCHOAncDec}{60.7}

\newcommand{\nWCMUAncDec}{52.2}

\newcommand{\nWCPubMax}{10.8}
\newcommand{\nWCPubMin}{1.1}
\newcommand{\nWCTWAncDec}{63.9}

\newcommand{\nWCertSHODec}{49.8}
\newcommand{\nWCertSMUDec}{45.8}
\newcommand{\nWCertSTWDec}{50.4}

\newcommand{\nWGainBHO}{+6.6}
\newcommand{\nWGainBMU}{+4.7}
\newcommand{\nWGainBTW}{+11.9}
\newcommand{\nWGainSHO}{+8.8}
\newcommand{\nWGainSMU}{+4.2}
\newcommand{\nWGainSTW}{+13.3}

\newcommand{\nWHODeleted}{2.3}

\newcommand{\nWHOPairs}{251}

\newcommand{\nWHOTwins}{366}

\newcommand{\nWMUDeleted}{2.5}

\newcommand{\nWMUPairs}{526}

\newcommand{\nWTWDeleted}{1.9}

\newcommand{\nWTWPairs}{413}

\newcommand{\nWVMUPairs}{309}

\newcommand{\nWVTWPairs}{113}

\newcommand{\nWaGainMax}{8.6}
\newcommand{\nWaGainMin}{4.0}

\newcommand{\nWfRecall}{0.964}
\newcommand{\nWfRecallOrig}{0.965}

\begin{document}
\title{Knowing When the Evidence Is Enough: Requirement-Directed Search with Horn-Closure Certificates}
\author{Xinyu Zhang}
\affiliation{\institution{Fudan University}\country{}}
\email{zhangxinyu@fudan.edu.cn}
\author{Weiwei Sun}
\affiliation{\institution{Fudan University}\country{}}
\email{wwsun@fudan.edu.cn}
\begin{abstract}
After each Web search, a search-augmented model must decide whether its evidence suffices: answer, search again, or abstain.
Prevailing systems leave this to holistic model judgement, which can accept a context naming every entity yet lacking one required fact.
We instead make sufficiency structural: a per-query requirement hypergraph states which facts jointly yield another fact.
This raises two challenges: deriving requirements from the question alone, and linking each requirement to passages about the right entity.
\method{} therefore decomposes the question, searches per requirement, and accepts links only after entity checks.
Horn closure over the links yields a certificate, the minimal sufficient passages and their derivation, or an abstention naming open requirements.
Given correct retrieval, requirements and links, the decision is sound, and complete up to the entity check.
Pooled over three corpora with unanswerable twins, \method{} decides significantly more accurately than search agents, graph retrieval and sufficiency judges in default configurations.
On the same evidence, it matches a proprietary model's sufficiency label without an additional model call.
Over 21M Wikipedia passages, entity checks add 4 to 13 points of decision accuracy where gold evidence exists.
Finally, answers from its certificate are 80\% correct, against 57\% from all retrieved passages.
\end{abstract}
\maketitle

\section{Introduction}
\label{sec:intro}

\looseness=-1 Search-augmented language models~\cite{chen2025mindsearch,li2025searcho1} now answer a large share of information needs on the Web.
A Web search assistant or a search agent~\cite{press2023selfask,jin2025searchr1} issues queries and reads the returned pages. It must then make one decision over and over: is the evidence it holds \emph{enough} to answer, should it search again, or should it decline?
This decision sits at the core of retrieval-augmented AI on the Web. It governs when the assistant stops querying, what it shows its reader, and when it tells the user that the evidence found does not support an answer.
Answering too early produces confident answers that the retrieved pages do not support~\cite{joren2025sufficient}. Searching too long wastes queries and latency~\cite{xie2026oversearching,zhang2026tosearch}, and declining without saying what is missing gives the user nothing to act on.
Retrieval-augmented generation (RAG)~\cite{lewis2020rag} makes this decision implicit: whether the context sufficed is left to the reader.

Recent work makes the judgement explicit but keeps it inside a model.
A sufficient-context autorater asks a strong model whether a context could answer the question and combines its label with self-rated confidence to abstain~\cite{joren2025sufficient}. Learned or prompted stopping rules decide when an agent has searched enough~\cite{roh2026halt,li2026s2g}.
These judgements are made by a model about a context as a whole or about a flat list of claims.
Multi-hop questions, however, are conjunctions: \emph{which film was released first} needs both release dates, and \emph{when did X's mother die} needs the mother's identity before her date of death.
A context that contains every entity of the question and most of the facts can still miss one conjunct, and nothing guarantees that the judgement notices which one.
Making sufficiency explicit poses three challenges. The requirements of a question are not given and must be derived from its text. A retrieved page must be linked to the requirement it states, not to a page about a related entity. The evaluation needs unanswerable questions whose deciding fact is truly absent from the corpus.

\begin{figure}[t]
\centering
\begin{tikzpicture}[
  font=\scriptsize, >=Stealth,
  band/.style={draw=gray!60, fill=gray!7, rounded corners=2pt, inner sep=3pt, align=left},
  head/.style={font=\scriptsize\bfseries, inner sep=1pt},
  box/.style={draw=gray!70, rounded corners=2pt, inner sep=2.2pt, align=center},
  atom/.style={draw=black!60, rounded corners=1.5pt, inner sep=1.6pt, align=center, fill=white},
  got/.style={atom, draw=green!45!black, fill=green!8},
  miss/.style={atom, draw=red!65!black, dashed, fill=red!5},
  tgt/.style={atom, draw=violet!60!black, dashed, fill=violet!5},
  bad/.style={box, draw=red!65!black, fill=red!5},
  good/.style={box, draw=blue!55!black, fill=blue!5}]
\node[band, text width=8.05cm, anchor=north] (task) at (4.15,0) {%
  \textbf{Task:} after a search, is the evidence enough? Answer, search again, or abstain.\\[1pt]
  \textbf{Q:} Which film was released first, \emph{Aas Ka Panchhi} or \emph{Phoolwari}?\\
  \textbf{Evidence:} P1 ``\emph{Aas Ka Panchhi} is a 1961 Hindi film \dots''\\
  \hphantom{\textbf{Evidence:}} P2 a page on \emph{Phoolwari} that gives no release date};
\coordinate (top) at ($(task.south)+(0,-1.6mm)$);
\node[head, anchor=north] (hl) at (top -| 2.0,0) {Holistic judgement};
\node[box, text width=3.55cm, anchor=north] (lj) at ($(hl.south)+(0,-2.6mm)$) {Judges the evidence as a whole:\\ both films are covered.};
\node[bad, text width=3.55cm, anchor=north] (lv) at ($(lj.south)+(0,-4.4mm)$) {``Enough'': it may answer, although\\ \emph{Phoolwari}'s date is missing. If it\\ declines, it cannot say what is missing.};
\draw[->] (lj) -- (lv);
\node[head, anchor=north] (hr) at (top -| 6.3,0) {\method{}: hypergraph and closure};
\node[got, text width=1.86cm, anchor=north] (a1) at ($(hr.south)+(-1.02,-2.4mm)$) {$a_1(1961)$ from P1\\ date of \emph{Aas Ka P.}};
\node[miss, text width=1.86cm, anchor=north] (a2) at ($(hr.south)+(1.02,-2.4mm)$) {$a_2(\,?\,)$ no unit\\ date of \emph{Phoolwari}};
\coordinate (j) at ($(a1.south)!0.5!(a2.south)+(0,-2.6mm)$);
\draw (a1.south) -- (j); \draw (a2.south) -- (j); \fill (j) circle (1pt);
\node[font=\tiny, anchor=west, text=black!60] at ($(j)+(0.9mm,-1.4mm)$) {$\{a_1,a_2\}\!\to\! a_3$};
\node[tgt, anchor=north] (q) at ($(j)+(0,-2.6mm)$) {$Q_q=\{a_3\}$: the earlier film};
\draw[->] (j) -- (q);
\node[good, text width=3.95cm, anchor=north] (rv) at ($(q.south)+(0,-2.2mm)$) {\mbox{$Q_q\not\subseteq\mathrm{Cl}$: abstain,}\\ or search the open $s_2$ (\emph{Phoolwari}'s date)};
\draw[gray!50] (top -| 4.15,0) -- (4.15,0 |- rv.south);
\end{tikzpicture}
\caption{The decision we study, on evidence that lacks one of two dates, a gap that a holistic judgement can miss. \method{} joins both dates in one hyperedge, so it answers only when both are stated and otherwise names the missing one.}
\label{fig:overview}
\end{figure}

We treat sufficiency as a structural property of the evidence.
For a query we build a \emph{requirement hypergraph}: each directed hyperedge says that a set of facts jointly yields another fact. A comparison is then one hyperedge with two required tail facts, and a bridge question is a chain of hyperedges whose later requirements are instantiated with earlier answers.
A context is sufficient when the monotone Horn closure of the facts it states contains the answer.
This yields \method{}, a certificate-driven search procedure (Figure~\ref{fig:overview}) that searches one requirement at a time, instantiated with the values already found, and links each retrieved unit to the requirement it states. It returns an inclusion-minimal sufficient context with its derivation, or abstains and reports the open requirements.
The closure does three things without an additional model call: it decides whether to answer, returns the minimal context the reader answers from, and names the open requirement to search next.
On a bridge chain it checks that every lookup is supported, and the same closure decides joins, computed requirements and alternative derivations.

The certificate is deductive and certifies that the context states every required fact.
When the gold support is retrieved and the requirements and links are correct, \method{} is sound, and complete on the retrieval event up to the entity check, and its context is inclusion-minimal.
A rule reading only the tail-to-head projection of the requirements cannot tell ``$a$ and $b$ yield $c$'' from ``$a$ or $b$ yields $c$''.

Pooled over 2WikiMultiHopQA, MuSiQue and HotpotQA with search-checked twins, \method{} decides significantly more accurately than search agents, graph retrieval and sufficiency judges in default configurations. On the same evidence, it matches a fast proprietary model's sufficiency label without an additional model call. Answers from its certificate are \nAsPooledCert\% correct against \nAsPooledPool\% from the whole pool. Over 21 million Wikipedia passages~\cite{karpukhin2020dpr}, entity anchoring raises its decision accuracy by 4 to 13 points on gold-verified pairs.

\paragraph{Contributions}
\begin{itemize}\setlength{\emergencystretch}{1.5em}
\item \emph{Sufficiency as Horn closure.} We decide sufficiency by the Horn closure of a per-query requirement hypergraph, with no model call once the evidence is linked. With correct retrieval, requirements and links, it is sound, complete up to the entity check and inclusion-minimal. Rules reading only pairwise projections provably cannot tell a conjunction from a disjunction.
\item \emph{\method{}, requirement-directed search.} Each search targets one requirement, instantiated with the values found, and anchored links supply its value. When the evidence does not suffice, \method{} abstains and names the open requirement, the deleted one in \nLocHit\% of traceable twins.
\item \emph{Evaluation with search-checked twins.} Questions are paired with search-checked unanswerable twins, and controls separate the evidence from the decision rule. With exact requirements \method{} recovers every enabling set in agent logs and every provable ProofWriter answer.
\end{itemize}

\section{Related Work}
\label{sec:related}

\begin{figure*}[t]
\centering
\begin{tikzpicture}[
  font=\scriptsize, >=Stealth, line width=0.5pt,
  panel/.style={draw=#1!55!black, fill=#1!5, rounded corners=3pt, inner sep=0pt},
  head/.style={font=\scriptsize\bfseries, text=#1!55!black, anchor=north west, inner sep=0pt},
  atom/.style={draw=black!60, fill=white, rounded corners=1.5pt, inner sep=1.3pt, align=center, minimum height=5.6mm},
  given/.style={atom, fill=gray!12},
  got/.style={atom, draw=green!45!black, fill=green!8},
  tgt/.style={atom, draw=violet!65!black, fill=violet!10, line width=0.8pt},
  fact/.style={draw=teal!55!black, fill=white, rounded corners=1pt, inner sep=0.9pt, font=\tiny, align=center},
  rej/.style={draw=red!55!black, fill=red!4, dashed, rounded corners=1pt, inner sep=0.9pt, font=\tiny, align=center, text=red!45!black},
  jn/.style={circle, fill=black!75, inner sep=0.85pt},
  el/.style={font=\tiny, text=black!55},
  card/.style={draw=black!30, fill=white, rounded corners=1.5pt, inner sep=1.8pt, align=left},
  okc/.style={card, draw=green!45!black, fill=green!5},
  noc/.style={card, draw=blue!55!black, fill=blue!5}]
\node[panel=blue, minimum width=3.95cm, minimum height=3.2cm, anchor=north west] (PA) at (0,0) {};
\node[panel=teal, minimum width=8.5cm, minimum height=3.2cm, anchor=north west] (PB) at (4.1,0) {};
\node[panel=violet, minimum width=4.95cm, minimum height=3.2cm, anchor=north west] (PC) at (12.75,0) {};
\node[head=blue] at (0.12,-0.13) {\textcircled{\tiny1} Requirements $\mathcal R(q)$};
\node[head=teal] at (4.22,-0.13) {\textcircled{\tiny2} Requirement hypergraph; columns are closure rounds};
\node[head=violet] at (12.87,-0.13) {\textcircled{\tiny3} Closure, certificate, open requirement};
\node[card, text width=3.67cm, anchor=north west] at (0.12,-0.45) {%
  $q$: Which film has the director born later, \emph{Je Fais Le Mort} or \emph{Red Roses, Red Lips, Red Wine}?\\[1pt]
  $s_1$ Who directed \emph{Je Fais Le Mort}?\\
  $s_2$ Who directed \emph{Red Roses\dots}?\\
  $s_3$ When was \#1 born?\\
  $s_4$ When was \#2 born?\\
  $s_5$ Whose director was born later?\\[1pt]
  Target $Q_q=\{a_5\}$. Lookup hyperedges:\\ \hspace*{1.5mm}$\{a_j\}_{j\in\mathrm{ref}(k)}\cup\{\phi_k(v)\}\to a_k(v)$};
\begin{scope}[xshift=2mm]
\foreach \x/\t in {4.95/{$S_0=G_q\cup\mathrm{facts}(C)$}, 7.0/{$S_1$}, 9.3/{$S_2$}, 11.6/{$S_3\ni a_5$}}
  \node[font=\tiny, text=teal!45!black] at (\x,-0.5) {\t};
\foreach \x in {5.85, 8.15, 10.45} \draw[teal!35, dashed] (\x,-0.42) -- (\x,-3.12);
\node[fact, font=\scriptsize, inner sep=1.3pt] (gA) at (4.75,-1.2) {$\phi_1$(Salomé)\\ $u_1$\,\textcolor{green!45!black}{\checkmark}};
\node[fact, font=\scriptsize, inner sep=1.3pt] (gB) at (4.75,-2.45) {$\phi_2$(Martin)\\ $u_2$\,\textcolor{green!45!black}{\checkmark}};
\node[got] (a1) at (7.0,-1.2) {$a_1$(Salomé)};
\node[got] (a2) at (7.0,-2.45) {$a_2$(Martin)};
\node[got] (a3) at (9.3,-1.2) {$a_3$(1960)};
\node[got] (a4) at (9.3,-2.45) {$a_4$(1899)};
\node[tgt] (a5) at (11.6,-1.82) {$a_5$: \emph{Je Fais}\\ \emph{Le Mort}};
\foreach \t/\h/\jx/\jy/\e in {gA/a1/5.85/-1.2/e_1, gB/a2/5.85/-2.45/e_2} {
  \node[jn] (j\h) at (\jx,\jy) {};
  \draw (\t) -- (j\h); \draw[->] (j\h) -- (\h);
  \node[el, anchor=north west] at ($(j\h)+(0.05,-0.05)$) {$\e$};
}
\foreach \t/\h/\jx/\jy/\f/\u/\side/\e in {a1/a3/8.15/-1.2/{$\phi_3$(1960)}/u_3/1/e_3, a2/a4/8.15/-2.45/{$\phi_4$(1899)}/u_4/-1/e_4} {
  \node[jn] (j\h) at (\jx,\jy) {};
  \node[fact] (f\h) at (\jx,\jy+\side*0.5) {\f\ $\u$\,\textcolor{green!45!black}{\checkmark}};
  \draw (\t) -- (j\h); \draw (f\h) -- (j\h); \draw[->] (j\h) -- (\h);
  \node[el, anchor=north west] at ($(j\h)+(0.05,\side*-0.03-0.02)$) {$\e$};
}
\node[rej] (rj) at (8.15,-1.79) {$u_5$, a near-namesake's page (illustrative):\\ its link for $s_4$ is rejected};
\draw[red!55!black, dashed] (rj.south) -- ($(ja4)+(0,0.09)$) node[pos=0.45, red!60!black, font=\tiny, fill=teal!5, inner sep=0.2pt] {$\times$};
\node[jn] (j5) at (10.45,-1.82) {};
\draw (a3) -- (j5); \draw (a4) -- (j5); \draw[->] (j5) -- (a5);
\node[el, anchor=north west, fill=teal!5, inner sep=0.4pt] at (10.15,-2.23) {$e_5$ (reasoning)};
\end{scope}
\node[card, text width=4.67cm, anchor=north west] (tr) at (12.87,-0.45) {%
  \textbf{Horn closure} of $G_q\cup\mathrm{facts}(C)$, by rounds:\\[0.5pt]
  \begin{tabular}{@{}l@{\ \ }l@{\ \ }l@{}}
  1 & fires $e_1,e_2$ & derives $a_1,a_2$\\[-1pt]
  2 & fires $e_3,e_4$ & derives $a_3,a_4$\\[-1pt]
  3 & fires $e_5$ & derives $a_5$: $Q_q\subseteq\mathrm{Cl}$
  \end{tabular}};
\node[okc, text width=4.67cm, anchor=north west] (ok) at ($(tr.south west)+(0,-1.0mm)$) {%
  \textbf{Certificate} $C=\{u_1,\dots,u_4\}$, minimal: dropping any $u_i$ loses $Q_q$. Answer \emph{Je Fais Le Mort} with the derivation.};
\node[noc, text width=4.67cm, anchor=north west] (no) at ($(ok.south west)+(0,-1.0mm)$) {%
  \textbf{Twin, $u_3$ deleted:} $e_3$ cannot fire, $a_3,a_5\notin\mathrm{Cl}$. Abstain and name the open $s_3$.};
\end{tikzpicture}
\caption{\method{} on a complex case: requirement hypergraph, Horn closure by rounds, minimal certificate and the twin's open requirement.}
\label{fig:framework}
\end{figure*}

\paragraph{Deciding sufficiency and abstaining in RAG}
Sufficient Context labels whether a context could answer a question with a prompted autorater and combines the label with self-rated confidence to abstain~\cite{joren2025sufficient}. Identify-then-verify prompting lists what a context lacks~\cite{jain2026missing}.
S2G-RAG asks a judge for sufficiency and for the missing items that drive the next query~\cite{li2026s2g}. MuSiQue's step-execution model judges a context sufficient when it suffices for every hop~\cite{trivedi2022musique}. HALT stops a search agent once a verifier matches every expected hop claim~\cite{roh2026halt}.
Uncertainty-driven retrieval~\cite{jiang2023flare}, selective answering~\cite{kamath2020selective}, training to refuse unsupported answers~\cite{song2025trustalign} and retrieval evaluators that trigger corrective search~\cite{yan2024crag} all judge a context with a model. \method{} derives sufficiency from explicit requirement hyperedges and returns a minimal context with its derivation.
Benchmarks probe abstention on unanswerable or misleading requests~\cite{rajpurkar2018squad,patidar2023grailqability,chen2024rgb,tang2024multihoprag,glockner2025neoqa,song2026webdetective,yang2024cragbench,kirichenko2025abstentionbench,muhamed2026refusalbench,peng2025uaeval4rag,liu2026crumqs}, and same-name entities mislead retrieval~\cite{chen2021amber,lee2024ambigdocs}. Closest to our twins are the deletion-based contrasts of DiRe and MuSiQue~\cite{trivedi2020dire,trivedi2022musique} and removing supporting evidence from the context~\cite{atanasova2022insufficient,lee2021pseudoevidentiality,gui2026faithrl,ming2025faitheval}. Our twins delete the evidence from the corpus and check by search that it is gone.
Conformal guarantees bound risk over a query distribution~\cite{kang2024crag,li2024traq}, certified robustness the effect of corrupted passages~\cite{xiang2026robustrag}. Our certificate is deductive and per query.

\paragraph{Iterative retrieval and search agents}
Interleaved retrieval and reasoning~\cite{trivedi2023ircot}, complexity routing~\cite{jeong2024adaptive}, reasoning models that search~\cite{li2025searcho1} and agents trained to search~\cite{jin2025searchr1} let the model decide when it has searched enough. So do models trained to abstain or to stop searching~\cite{zhang2026awarl,liu2026bapo,zhai2026abstainr1,zhang2026tosearch,sun2026autosearch}, self-reflection~\cite{asai2024selfrag}, step verification~\cite{zhao2023verifyedit,li2024chainofknowledge} and learned stopping~\cite{qi2021irrr,yang2025simrag}. More search can hurt abstention on unanswerable queries~\cite{xie2026oversearching}.
Decomposition into sub-questions retrieved or grounded one by one~\cite{qi2019golden,wolfson2020break,khot2023decomposed,xu2024searchchain,zhuang2024efficientrag,shi2024genground,zhu2025chainrag} or into question trees~\cite{cao2023probtree,zhang2023roht} and planners that order sub-questions in a dependency graph or re-plan after evaluating each step~\cite{chen2026logicrag,xiao2025lag,zhu2026reap,chen2025mindsearch} decide what to search next. Search-in-the-Chain verifies each step against retrieved passages~\cite{xu2024searchchain}, and SR-RAG checks steps and chain completeness with a symbolic verifier~\cite{wang2026srrag}. QACheck, Think-on-Graph, ProgramFC and FOLK let an LLM judge whether sub-answers or reasoning paths suffice, or combine sub-results logically~\cite{pan2023qacheck,sun2024thinkongraph,pan2023programfc,wang2023folk}. In \method{}, a closure over per-requirement links decides without a model call, returning a minimal context or the open requirement to search next.

\paragraph{Graphs and hypergraphs for retrieval and memory}
Graph-based RAG links passages through entities and propagates relevance~\cite{gutierrez2025hipporag2}. Query-specific hyper-relational graphs~\cite{panda2024holmes} and hypergraph RAG keep $n$-ary facts as hyperedges~\cite{luo2025hypergraphrag,lien2026hyperrag,zai2026proh,wang2026hgrag,feng2026hyperrag}, hypergraph memories group long interactions~\cite{zhou2026hgmem}, and recent work learns hyperedge retrievers~\cite{zhao2026hyperreco}.
Like them, \method{} keeps $n$-ary facts as hyperedges, but its directed requirement hyperedges decide whether the evidence found suffices, which a rule reading only their pairwise projection cannot do reliably.

\paragraph{Directed hypergraphs and Horn reasoning}
Directed hypergraphs and B-connectivity~\cite{gallo1993directed} and linear-time Horn satisfiability~\cite{dowling1984linear} are classical. Why-provenance attaches derivations to query answers~\cite{buneman2001why,calautti2024why}, query completeness decides whether an incomplete database still returns every answer~\cite{levy1996complete,razniewski2011completeness}, and rationales and context attributions are sought for predictions and RAG answers~\cite{deyoung2020eraser,cohenwang2024contextcite}. Our pruning is the classical deletion filter for minimal sets under a monotone predicate~\cite{chinneck1991locating,marquessilva2013msmp}, as in sufficient input subsets~\cite{carter2019sufficient} and minimal evidence groups~\cite{li2025minimal}.
Proofs over stated facts~\cite{tafjord2021proofwriter,dalvi2021entailmentbank}, including deduction that names a missing premise~\cite{sprague2022incomplete}, backward chaining or unification over facts that an LLM supplies or retrieves~\cite{kazemi2023lambada,tafjord2022entailer,weir2024nellie,weber2019nlprolog,lee2025symba} and solvers over LLM-extracted facts~\cite{pan2023logiclm} derive answers. So does concurrent work with deterministic operators over a knowledge lattice and abstention~\cite{ren2026latweave} or with Prolog written from passages~\cite{gann2026nesyrag}. We use the closure for a different decision, whether a retrieved context suffices.

\section{Evidence, Requirements and Sufficiency}
\label{sec:formal}

A retrieval-augmented system answers a query $q$ from a collection $\Units$ of \emph{evidence units}: passages of Web documents for a search assistant, or turns of an interaction log for an agent with memory.
A \emph{context} is a finite set $\Ctx\subseteq\Units$ that the system passes to its reader.
We make precise when a context suffices to answer $q$.

\paragraph{Requirement hypergraphs}
Let $\Atoms_q$ be a finite set of \emph{atoms}, the propositions that matter for $q$: the value of an intermediate entity, a date, the state of an object at some time.

\begin{definition}[Requirement hypergraph]\label{def:req}
A requirement hypergraph is $\Rq=(\Atoms_q,\Rules_q,\Target_q,G_q)$, where every $r\in\Rules_q$ is a directed hyperedge $r=(\Req(r),h(r),\Exc(r))$ with required tail atoms $\Req(r)\subseteq\Atoms_q$, a head atom $h(r)\in\Atoms_q$ and excluded atoms $\Exc(r)\subseteq\Atoms_q$ (empty in both instantiations below).
Here $\Target_q\subseteq\Atoms_q$ is the set of target atoms and $G_q\subseteq\Atoms_q$ the atoms given with the query.
The tail size $\delta^T_r=|\Req(r)|$ counts the facts it conjoins.
\end{definition}

\begin{definition}[Evidence]\label{def:evidence}
Every unit $u\in\Units$ \emph{states} a set of atoms $\facts(u)\subseteq\Atoms_q$ and \emph{witnesses} a set of hyperedges $\rules(u)\subseteq\Rules_q$.
The hyperedges $\Rules_q^{0}\subseteq\Rules_q$ witnessed by no unit belong to the query.
Both maps extend additively to contexts: $\facts(\Ctx)=\bigcup_{u\in\Ctx}\facts(u)$ and $\rules(\Ctx)=\Rules_q^{0}\cup\bigcup_{u\in\Ctx}\rules(u)$.
\end{definition}

\begin{definition}[Horn closure]\label{def:closure}
For atoms $S$ and hyperedges $E$, let $\Phi_{S,E}(X)=S\cup\{h(r): r\in E,\ \Req(r)\subseteq X\}$.
The closure $\Cl(S,E)$ is the least fixpoint of $\Phi_{S,E}$.
Excluded atoms do not take part in firing, so that a least fixpoint exists.
\end{definition}

\begin{definition}[Sufficiency and violation]\label{def:suff}
A context $\Ctx$ is \emph{sufficient} for $q$ if $\Target_q\subseteq\Cl\big(G_q\cup\facts(\Ctx),\ \rules(\Ctx)\big)$.
A hyperedge $r$ is \emph{violated} with respect to a set $U$ of units (in \method{}, the linked units of the retrieval pool) if $\Exc(r)\cap\facts(U)\neq\emptyset$.
It is also violated if its head $h(r)$ has a rival value whose support in $U$ is at least that of the chosen value.
Violations change the decision to answer, never the closure.
Since a new unit can create a rival, the decision to answer is not monotone in the context, although sufficiency is.
\end{definition}
\begin{lemma}[Closure operator]\label{lem:closure}
$\Phi_{S,E}$ is monotone in $X$, $S$ and $E$.
Hence $\Cl(S,E)$ exists, $S\subseteq\Cl(S,E)$, $\Cl$ is monotone in both arguments and idempotent, and sufficiency is monotone in the context: $\Ctx\subseteq\Ctx'$ and $\Ctx$ sufficient imply $\Ctx'$ sufficient.
Forward chaining with one counter per hyperedge computes $\Cl(S,E)$ in $O\big(|S|+\sum_{r\in E}(\delta^T_r+1)\big)$ time~\cite{dowling1984linear}.
\end{lemma}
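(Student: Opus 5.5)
The plan is to verify monotonicity of $\Phi_{S,E}$ directly from Definition~\ref{def:closure} and then obtain existence via Knaster--Tarski, or more concretely via Kleene iteration. The other claims follow from this.

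\textbf{Monotonicity and existence.} If $X\subseteq X'$, every $r\in E$ with $\Req(r)\subseteq X$ also has $\Req(r)\subseteq X'$, so $\Phi_{S,E}(X)\subseteq\Phi_{S,E}(X')$. Enlarging $S$ enlarges the first union term. Enlarging $E$ only adds heads. Since $\Atoms_q$ is finite, $\Phi_{S,E}$ is a monotone map on the finite complete lattice $2^{\Atoms_q}$. The chain $X_0=\emptyset$, $X_{i+1}=\Phi_{S,E}(X_i)$ is increasing, so it stabilises after at most $|\Atoms_q|+1$ steps. Its limit is a fixpoint, and it lies below every prefixpoint $Y$ (that is, $\Phi(Y)\subseteq Y$) by induction on $i$. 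Hence the limit is the least fixpoint, and equally the least prefixpoint. Because $\Phi_{S,E}(X)\supseteq S$ for every $X$, we get $S\subseteq\Cl(S,E)$.

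\textbf{Monotonicity of $\Cl$ and idempotence.} Suppose $S\subseteq S'$ and $E\subseteq E'$, and write $Y=\Cl(S',E')$. Then $\Phi_{S,E}(Y)\subseteq\Phi_{S',E'}(Y)=Y$, so $Y$ is a prefixpoint of $\Phi_{S,E}$. Leastness among prefixpoints then gives $\Cl(S,E)\subseteq Y$.

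For idempotence, read it as $\Cl(\Cl(S,E),E)=\Cl(S,E)$. Write $Z=\Cl(S,E)$. The inclusion $\supseteq$ holds by extensivity. For the inclusion $\subseteq$, note that $\Phi_{Z,E}(Z)=Z\cup\Phi_{S,E}(Z)=Z$, so $Z$ is a fixpoint of $\Phi_{Z,E}$ and contains its least fixpoint.

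\textbf{Monotonicity of sufficiency.} Let $\Ctx\subseteq\Ctx'$. Definition~\ref{def:evidence} makes $\facts$ and $\rules$ unions over units, so $\facts(\Ctx)\subseteq\facts(\Ctx')$ and $\rules(\Ctx)\subseteq\rules(\Ctx')$; the term $\Rules_q^0$ is common to both. Monotonicity of $\Cl$ then gives
\[
\Target_q\subseteq\Cl\big(G_q\cup\facts(\Ctx),\rules(\Ctx)\big)\subseteq\Cl\big(G_q\cup\facts(\Ctx'),\rules(\Ctx')\big).
\]
Violations are irrelevant here, since Definition~\ref{def:suff} keeps them out of the closure.

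\textbf{Complexity.} I expect this to be the only step needing care, and I would follow Dowling--Gallier.
\begin{itemize}
\item Build an index from each atom to the hyperedges whose tail contains it. This costs $\sum_r\delta^T_r$.
\item Initialise a counter $c_r=\delta^T_r$ for each hyperedge, and a queue holding the atoms of $S$, each marked as derived.
\item When an atom is popped, decrement $c_r$ for every hyperedge that lists it. When some $c_r$ reaches $0$, mark $h(r)$ and enqueue it if it is new.
\item Hyperedges with empty tail start at $c_r=0$ and fire at initialisation. They are covered by the $+1$ in the bound.
\end{itemize}
Each atom is enqueued at most once and each tail incidence is decremented at most once. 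Each hyperedge fires at most once, so the total cost is $O(|S|+\sum_{r\in E}(\delta^T_r+1))$.

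For correctness, the marked set is contained in every prefixpoint: by induction on the order of marking, each marked atom is in $S$ or is the head of a rule whose whole tail is already marked. At termination, every $r$ with its tail inside the marked set has had its counter driven to $0$, so $h(r)$ is marked. The marked set is therefore a prefixpoint, and so it equals $\Cl(S,E)$.

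One point to handle: an atom listed in $\Req(r)$ must count once per occurrence. Since $\Req(r)$ is a set, each atom occurs at most once in a tail, and no double-decrement issue arises.
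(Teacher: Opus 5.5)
Your proof is correct and follows the paper's argument: monotonicity of $\Phi_{S,E}$, the least fixpoint on the finite lattice $2^{\Atoms_q}$ as the union of the iterates from $\emptyset$, extensivity from $S\subseteq\Phi_{S,E}(X)$, idempotence from $\Cl(S,E)$ being a fixpoint of $\Phi_{\Cl(S,E),E}$, and monotone sufficiency from the additivity of $\facts$ and $\rules$. You are more explicit than the paper in two places: you give the prefixpoint argument where it only says that ordered maps have ordered least fixpoints, and you sketch the counter algorithm and its correctness where it only cites Dowling--Gallier for the linear bound.
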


\paragraph{Two instantiations}
\emph{Multi-hop questions over Web documents.}
A question is decomposed into requirements $s_1,\dots,s_m$, each a single-fact question that may refer to the answers of earlier requirements $\mathrm{ref}(k)\subseteq\{1,\dots,k-1\}$ (``When did \#1 die?'').
Atom $a_k(v)$ reads ``requirement $k$ has value $v$''.
A \emph{lookup} requirement needs a passage: a passage $u$ states $\phi_k(v)\in\facts(u)$ if it states that the answer to $s_k$, instantiated with the chosen values of its references (one instantiation), is $v$, and the query contributes the hyperedge $\{a_j(v_j)\}_{j\in\mathrm{ref}(k)}\cup\{\phi_k(v)\}\to a_k(v)$.
A \emph{reasoning} requirement, such as choosing the earlier of two dates, is computed from earlier derived facts: $\{a_j(v_j)\}_{j\in\mathrm{dep}(k)}\to a_k(v)$ with $\mathrm{dep}(k)\supseteq\mathrm{ref}(k)$.
The target is $a_m(v)$ for the last requirement $m$ and its best-supported value $v$.
Sufficiency and minimality are relative to that instantiation and value, and rival values are left to the violation rule.
Comparison questions therefore carry a genuine conjunction, $\delta^T_r=2$, that no single passage satisfies.
\emph{Interaction logs of agents.}
For an environment with states $x_0,\dots,x_T$, an atom is a pair $(f,k)$: fact $f$ during its $k$-th maximal validity interval.
Turn $t$ executes an action $\alpha_t$ with preconditions $\mathrm{Pre}(\alpha_t)$ and witnesses the hyperedges $\{(g,\kappa(g,t-1)):g\in\mathrm{Pre}(\alpha_t)\}\to(w,\kappa(w,t))$ for every fact $w$ true at $t$ but not at $t-1$, where $\kappa(f,t)$ indexes the interval of $f$ that contains $t$.
Each interval is produced only by the turn that opens it.
Deleted facts only close intervals and generate no hyperedge, so nothing is ever underived and the closure stays monotone.

\paragraph{What projections lose}
Many graph-based retrieval and memory systems keep a pairwise view of this structure.
The \emph{tail-to-head projection} of a hyperedge set is $\Proj(\Rules)=\{(v,h(r)): r\in\Rules,\ v\in\Req(r)\}$.
The \emph{tail shuffle} $\tau$ replaces every $r$ with $\delta^T_r\ge2$ by the singleton hyperedges $\{v\}\to h(r)$, $v\in\Req(r)$.
It keeps $\Proj$ unchanged and destroys only the conjunction.
For retrieval over documents, let $\mathbf B\in\{0,1\}^{|\mathcal A|\times|\Units|}$ be the incidence between entities $\mathcal A$ and units, and let its 2-section be $\Sec(\mathbf B)=\{\{a,b\}:\exists u,\ \mathbf B_{au}\mathbf B_{bu}=1\}$.

An \emph{instance} $\mathcal M$ is a query with its requirements, evidence maps and context, $\suff(\mathcal M)$ is its sufficiency label, $\tau\mathcal M$ applies $\tau$ to its hyperedges, and $\tau_\#\mathcal D$ is the law of $\tau\mathcal M$ for $\mathcal M\sim\mathcal D$.
The following proposition is why \method{} decides on directed hyperedges.

\begin{proposition}[Directed blindness]\label{prop:directed}
Let $D$ be any decision rule whose input is the query, the contents of the atoms, the facts the context states and $\Proj(\Rules_q)$.
Then $D$ returns the same label on an instance $\mathcal M$ and on its tail-shuffled twin $\tau\mathcal M$.
For any distribution $\mathcal D$ over instances, on the balanced mixture of $\mathcal D$ and $\tau_\#\mathcal D$,
$\Pr[D\text{ errs}]\ge\tfrac12\Pr_{\mathcal M\sim\mathcal D}[\suff(\mathcal M)\neq\suff(\tau\mathcal M)]$.
\end{proposition}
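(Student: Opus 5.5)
The proof has two parts: an invariance statement, and a two-point (Le Cam style) lower bound that follows from it.

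\emph{Invariance.} First I check that $\tau$ leaves every input of $D$ unchanged. By definition, $\tau$ acts only on the hyperedge set. The query, the atom set $\Atoms_q$ with its contents, and the evidence map $\facts$ are untouched, so $\facts(\Ctx)$ is the same in $\mathcal M$ and $\tau\mathcal M$. The remaining input is the projection. For a hyperedge $r$ with $\delta^T_r\ge 2$, the projection contributes the pairs $(v,h(r))$ for $v\in\Req(r)$. The singletons $\{v\}\to h(r)$ that replace $r$ contribute exactly the same pairs. Hyperedges with $\delta^T_r\le 1$ are kept as they are. Hence $\Proj(\Rules_q)=\Proj(\tau\Rules_q)$.

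The map $\rules(\cdot)$ must then be carried along consistently: a unit witnessing $r$ now witnesses its singletons. This affects the sufficiency label but not $D$'s input, since $D$ does not read $\rules$. If $D$ is deterministic, $D(\mathcal M)=D(\tau\mathcal M)$ follows at once. If $D$ is randomized, its output laws on the two instances coincide, and the argument below goes through with expectations in place of values.

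\emph{Error bound.} Draw $\mathcal M\sim\mathcal D$ and a fair coin $b$. Present $\mathcal M$ if $b=0$ and $\tau\mathcal M$ if $b=1$; this is exactly the balanced mixture of $\mathcal D$ and $\tau_\#\mathcal D$. Condition on $\mathcal M$.

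If $\suff(\mathcal M)\neq\suff(\tau\mathcal M)$, the labels are Boolean, so the single value $D(\mathcal M)=D(\tau\mathcal M)$ is correct on exactly one of the two instances. The conditional error is therefore exactly $\tfrac12$; for randomized $D$, the error probabilities on the two instances sum to $1$. If the labels agree, the conditional error is at least $0$.

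Averaging over $\mathcal M$ gives $\Pr[D\text{ errs}]\ge\tfrac12\Pr_{\mathcal M\sim\mathcal D}[\suff(\mathcal M)\neq\suff(\tau\mathcal M)]$, as claimed.

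\emph{Main obstacle.} The only delicate step is pinning down what ``the input'' of $D$ is. The argument needs every component to be a function of $(\text{query},\Atoms_q,\facts,\Proj(\Rules_q))$ alone. In particular, $D$ must not see $\rules(\Ctx)$ or $\delta^T_r$, which would reveal the tail sizes. I would state this restriction explicitly.

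For completeness, I would also note that the bound is not vacuous. For a comparison hyperedge $\{a_1,a_2\}\to a_3$ with only $a_1$ stated, Lemma~\ref{lem:closure}'s closure makes $\mathcal M$ insufficient, while in $\tau\mathcal M$ the singleton $\{a_1\}\to a_3$ fires and makes it sufficient. A distribution supported on such instances therefore forces error $\tfrac12$.
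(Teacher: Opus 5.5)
Your proof is correct and takes essentially the paper's route. The paper also argues that $\Proj(\tau\Rules_q)=\Proj(\Rules_q)$ and that nothing else $D$ reads changes, so $D(\mathcal M)=D(\tau\mathcal M)$; then on each pair with differing labels $D$ errs on exactly one member, which has weight one half under the balanced mixture. Your explicit coupling, the extension to randomized rules and the non-vacuity example are sound additions that the paper leaves implicit.
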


\begin{proposition}[Unit blindness of the 2-section]\label{prop:undirected}
Let $u^\star\in\Units$ be a unit every entity pair of which co-occurs in another unit.
Then $\Sec(\mathbf B)$ is the same for $\Units$ and $\Units\setminus\{u^\star\}$, so every retrieval or decision rule whose input is the query, entity contents and $\Sec(\mathbf B)$ gives the same output on both collections.
If $\Units$ is sufficient and $u^\star$ is the only unit stating an atom outside $G_q$ that no hyperedge derives and every derivation of $\Target_q$ uses, sufficiency flips from true to false.
On the balanced mixture of such pairs any such rule errs with probability at least $\tfrac12$.
\end{proposition}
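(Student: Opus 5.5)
The proof has three parts: invariance of the 2-section, the sufficiency flip, and the error bound.

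\emph{Invariance.} We show $\Sec(\mathbf B)$ is unchanged when $u^\star$ is removed. Removing a column of $\mathbf B$ can only delete pairs from the 2-section, so $\Sec$ for $\Units\setminus\{u^\star\}$ is contained in $\Sec$ for $\Units$. Conversely, take a pair $\{a,b\}$ witnessed only through $u^\star$, that is, $\mathbf B_{au^\star}\mathbf B_{bu^\star}=1$. By hypothesis $a$ and $b$ co-occur in some $u\neq u^\star$, so the pair is still witnessed after removal. The two edge sets are therefore equal. A rule whose input is the query, the entity contents and $\Sec(\mathbf B)$ receives literally the same input on both collections, so it returns the same output.

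\emph{Sufficiency flips.} On $\Units$ the instance is sufficient by assumption. For $\Units\setminus\{u^\star\}$, let $x$ be the distinguished atom: $x\notin G_q$, $x$ is the head of no hyperedge, and $u^\star$ is the only unit with $x\in\facts(u)$. Then $x\notin G_q\cup\facts(\Units\setminus\{u^\star\})$. We show by induction on the Kleene iterates $\Phi^n(\emptyset)$ that $x$ never enters the closure. The initial set does not contain $x$, and every atom added later is some $h(r)$, which cannot equal $x$. Since $\Phi$ is monotone (Lemma~\ref{lem:closure}), its least fixpoint is the union of these iterates, so $x\notin\Cl(G_q\cup\facts(\Units\setminus\{u^\star\}),\rules(\Units\setminus\{u^\star\}))$.

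To conclude insufficiency, we formalize "every derivation of $\Target_q$ uses $x$" as: for every set of atoms $S$ with $x\notin S$, $\Target_q\not\subseteq\Cl(S,\rules(\Units)\setminus\{r:x\in\Req(r)\})$. The closure without $u^\star$ uses only the hyperedges in $\rules(\Units\setminus\{u^\star\})\subseteq\rules(\Units)$. Any hyperedge with $x$ in its tail never fires, because $x$ is absent, so those hyperedges can be dropped without changing the closure. This places us inside the hypothesis, and $\Target_q$ is not contained in the closure.

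The main obstacle is stating this derivation-use condition precisely. I would fix it explicitly as a definition, namely that every proof tree of a target contains a leaf or internal node $x$. The argument above then shows that the iterates stay inside the $x$-free closure, so no target is reached. We also need $u^\star$ to be the only unit witnessing any rule that matters. Because rules witnessed by $u^\star$ are also lost, and $\rules(\cdot)$ is monotone, removing them only shrinks the closure, so they cannot restore sufficiency.

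\emph{Error bound.} Put probability $\tfrac12$ on each member of a pair $(\Units,\Units\setminus\{u^\star\})$. The rule's output is identical on the two, while the correct labels differ (sufficient versus insufficient). It is therefore wrong on exactly one of them, giving conditional error exactly $\tfrac12$ on each pair. Averaging over any mixture of such pairs yields error at least $\tfrac12$. This is the same coupling argument as in Proposition~\ref{prop:directed}, with flip probability $1$.
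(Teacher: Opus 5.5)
Your proof is correct and follows the paper's argument, with the details the paper omits filled in. The 2-section is the union of the per-unit pair sets, so removing $u^\star$ deletes no pair. The distinguished atom can be neither stated nor re-derived without $u^\star$, so the targets become underivable. The two-point argument of Proposition~\ref{prop:directed} then gives the bound.

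One repair is needed in your first formalization of ``every derivation uses $x$'': restrict $S$ to subsets of $G_q\cup\facts(\Units)$. Quantifying over all $x$-free $S$ makes the hypothesis fail whenever $x\notin\Target_q$, since you can take $S=\Target_q$. Your argument still goes through unchanged, because the set you actually plug in, $G_q\cup\facts(\Units\setminus\{u^\star\})$, is such a subset.
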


This proposition is why \method{} links individual units to requirements. Both propositions are scoped two-point separation arguments (Appendix~\ref{app:proofs}).
A star expansion with one node per hyperedge, typed incidences and tail sizes $\delta^T_r$ is information-equivalent to the hypergraph, and passage-node graphs keep unit identity.
For them the question is whether the decision uses the information, as \method{} does by construction (the directed witness is classical~\cite{gallo1993directed}).

\section{Certified Sufficiency}
\label{sec:method}

\method{} turns sufficiency into search (Algorithm~\ref{alg:certify}, Figure~\ref{fig:framework}).

\begin{algorithm}[t]
\caption{\method{}: certificate-driven search}\label{alg:certify}
\KwIn{query $q$; unit collection $\Units$; retriever; linker; search schedule $(\ell_1,\dots,\ell_L)$}
\KwOut{answer with certificate $(\Ctx,\text{derivation})$, or abstention with open requirements}
build $\Rq$ \tcp*{oracle annotation, engine rules, or LLM decomposition of $q$ alone}
$\mathit{pool}\gets\emptyset$; $\mathit{open}\gets\emptyset$\;
\ForEach{requirement $s_k$ in topological order}{
  \lIf{a reference of $s_k$ is unresolved}{$\mathit{open}\gets\mathit{open}\cup\{k\}$; \textbf{continue}}
  instantiate $s_k$ with the resolved values of its references\;
  \lIf{$s_k$ is a reasoning requirement}{derive $a_k(v)$ from $\mathrm{dep}(k)$ and the reader's value $v$ (no value: add $k$ to $\mathit{open}$); \textbf{continue}}
  \lFor{$i=1,\dots,L$ until some unit states $\phi_k(\cdot)$}{retrieve $\ell_i$ new units for $s_k$, link each to $\phi_k(\cdot)$ and add it to $\mathit{pool}$}
  \lIf{no unit states $\phi_k(\cdot)$}{$\mathit{open}\gets\mathit{open}\cup\{k\}$}
}
$S\gets\Cl\big(G_q\cup\facts(\mathit{pool}),\rules(\mathit{pool})\big)$\;
\lIf{$\Target_q\not\subseteq S$}{\Return abstain with the requirements in $\mathit{open}$}
\lIf{a lookup step of the derivation has no accepted link}{\Return abstain with that requirement}
$\Ctx\gets$ first producers of the backward derivation of $\Target_q$ over accepted links, in retrieval order\;
\lForEach{$u\in\Ctx$ in retrieval order}{\textbf{if} $\Ctx\setminus\{u\}$ is sufficient for $\Target_q$ over accepted links \textbf{then} $\Ctx\gets\Ctx\setminus\{u\}$}
\lIf{violation rule and a hyperedge of the derivation is violated}{\Return abstain with the violated requirement}
\Return reader$(q,\Ctx,\text{derivation})$ with certificate $(\Ctx,\text{derivation})$
\end{algorithm}

\paragraph{Requirements, retrieval and linking}
On Web question answering, $\Rq$ comes from decomposing $q$ with a language model that sees only $q$ (\emph{deployed}), or from dataset annotations (\emph{oracle}, reported separately).
Retrieval is per requirement: the query sent to the search engine is the instantiated requirement, so each search targets a single missing fact, and the open requirements decide what to search next.
A reasoning requirement takes the value the reader computes in one call from every fact derived before it, its $\mathrm{dep}(k)$, and stays open without one.
Linking asks a reader whether one unit states the answer to one instantiated requirement.
Different values stated for the same requirement are grouped (normalised string match, containment or token F1 $\ge0.6$, greedily in retrieval order against each group's first string, which names it).
The \emph{support} of a value is the number of linked units that state it.
Atoms $a_k(v)$ exist only for values that some linked unit states, so $\Atoms_q$ is finite and determined by the pool.
The search for a requirement stops after the first round that links a unit, every unit of that round being judged.
The chosen value has the largest support among linked units (ties: the earliest), and a rival that ties it violates the requirement.

\paragraph{Entity anchoring and verification}
Most false links connect a requirement to a passage about another entity, often a near-namesake, that states a value for the wrong subject.
A linked unit is \emph{anchored} to a lookup requirement when its title names the requirement's subject: every word of the title (disambiguating parenthetical removed) occurs in the instantiated requirement. Title matching also selects documents in fact verification~\cite{nie2019combining}.
Consider a lookup step whose value has no anchored supporting link.
It is \emph{verified} when the reader, shown the first two supporting passages in turn, answers that one of them gives the value for exactly the entity the requirement names.
\method{} accepts the derivation only when the chosen value of every lookup step is anchored or verified (its link is \emph{accepted} by this \emph{entity check}).
It computes the context over accepted links alone and lets the best-supported value stand when a rival ties it.

\paragraph{The Horn-closure certificate}
After the search, forward chaining over the linked pool decides whether $\Target_q$ is derived, and the entity check is applied to each lookup step of the derivation.
Over accepted links, a unit $u$ states only the atoms $\facts_{\mathrm{acc}}(u)\subseteq\facts(u)$ whose link passes the check for their step. There the backward derivation takes the earliest retrieved unit stating each of its atoms, and the deletion filter drops every unit whose removal keeps $\Target_q$ derived.
The certificate is this minimal context $\Ctx$ with its derivation, which gives each requirement's value, passage and hyperedge (Figure~\ref{fig:framework}).
On abstention \method{} reports the open requirements. Those whose references are resolved are the facts no linked passage states, the next to search for, and the others depend on them (after a failed check, it reports the failing lookup).
On a bridge chain of lookups the closure is the step-wise rule ``every lookup has an accepted link''. In general it requires the target itself to be derived and checks only the lookups its derivation uses, so the same check decides joins, computed requirements and alternative derivations.
Given the links, the certificate attests that the context states every required fact (under the violation rule, also that no rival value is as well supported), not that the answer is correct.

\paragraph{Properties}
Write $\mathcal R^*(q)$ and $\facts^*,\rules^*$ for the true requirements and evidence maps, and $F_q\subseteq\Units$ for a sufficient \emph{gold support}.
The following proposition states when \method{}'s decision is exactly right.

\begin{proposition}[Conditional certificate]\label{thm:cert}
Assume three hypotheses. (H1) \emph{Gold retrieval}: the search for each requirement retrieves its gold units, so $F_q\subseteq\mathit{pool}$. (H2) \emph{Correct requirements}: $\Rq=\mathcal R^*(q)$, the true requirements instantiated with the values the pool states. (H3) \emph{Exact linking}: $\facts(u)=\facts^*(u)$ and $\rules(u)=\rules^*(u)$ for $u\in\mathit{pool}$, and each lookup requirement has one true value. Then:
\begin{enumerate}
\item \emph{Completeness on the retrieval event.} $\Target_q$ is derived, so \method{} abstains only on a violation or a lookup link that fails the entity check, never for lack of evidence.
\item \emph{Soundness.} If \method{} answers with context $\Ctx$, then $\Ctx$ is sufficient under $(\mathcal R^*,\facts^*,\rules^*)$ and violates no hyperedge of its derivation (under (H3) no rival value is linked, so this holds with rival values allowed).
\item \emph{Minimality.} The returned $\Ctx$ is sufficient, and no proper subset is sufficient over accepted links.
\item \emph{Uniqueness.} Suppose the pool derives $\Target_q$ by a single derivation (every atom derived as a head is stated by no unit and is the head of exactly one hyperedge that fires). If its stated atoms outside $G_q$ and witnessed hyperedges outside $\Rules_q^{0}$ each have exactly one producing unit in the pool, $\Ctx$ is the unique inclusion-minimal subset of the pool sufficient over accepted links and does not depend on the pruning order.
\item \emph{Cost.} Certification takes at most $|\Ctx_0|+1$ closure computations, where $\Ctx_0$ is the backward-derivation set, each linear in the size of the instance.
\end{enumerate}
\end{proposition}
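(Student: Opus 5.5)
The plan is to prove the five items in order, using Lemma~\ref{lem:closure} (monotonicity and idempotence of $\Cl$, and its linear-time computation) as the only nontrivial tool. Everything else is bookkeeping about what Algorithm~\ref{alg:certify} does under (H1)--(H3).

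\emph{Completeness.} First observe that sufficiency of $F_q$ under $(\mathcal R^*,\facts^*,\rules^*)$ transfers to the computed objects: (H2) makes $\Rq$ the true hypergraph, and (H3) makes $\facts,\rules$ agree with the true maps on the pool. So $F_q$ is sufficient for the computed $\Rq$ as well. By (H1), $F_q\subseteq\mathit{pool}$, and monotonicity of sufficiency (Lemma~\ref{lem:closure}) gives $\Target_q\subseteq S$. The closure test therefore never triggers an abstention. The only remaining abstention branches in Algorithm~\ref{alg:certify} are the failed entity check and the violation rule, which is exactly item~1.

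One subtlety needs care here: the topological search loop could leave a requirement open even though the closure derives it. I would argue that under (H1) each lookup's search retrieves its gold unit, so some unit states $\phi_k(\cdot)$. I would also argue that under (H3) the single true value is the chosen one, so references resolve with the true values. An induction along the topological order then shows every requirement on the gold derivation is resolved.

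\emph{Soundness and minimality.} When \method{} answers, the returned $\Ctx$ passed the final sufficiency test over accepted links. Since $\facts_{\mathrm{acc}}\subseteq\facts=\facts^*$, monotonicity of $\Cl$ in $S$ lifts this to sufficiency under the true maps. Absence of violations holds because the algorithm checked them. Under (H3) no rival value is linked, so the rival clause is vacuous.

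For minimality, the deletion filter keeps $\Ctx$ sufficient at each step by construction. Let $\Ctx$ be the final set and suppose some proper subset $\Ctx'\subsetneq\Ctx$ were sufficient. Pick $u\in\Ctx\setminus\Ctx'$. When $u$ was tested, the current set $\Ctx_u$ contained the final $\Ctx$, so $\Ctx_u\supseteq\Ctx\supseteq\Ctx'$. Hence $\Ctx_u\setminus\{u\}\supseteq\Ctx'$ is sufficient by monotonicity, and $u$ would have been dropped, a contradiction. This is the standard deletion-filter argument for monotone predicates.

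\emph{Uniqueness.} This is the step I expect to be the main obstacle, because the hypotheses must be turned into a statement that every sufficient subset contains a fixed set $P$ of producers. Take the single derivation and let $P$ be the set of unique producing units of its stated atoms outside $G_q$ and its witnessed hyperedges outside $\Rules_q^{0}$. I would show that every sufficient subset $T$ contains $P$. Suppose $T$ misses the producer of some required atom or hyperedge. Derived heads are stated by no unit and have exactly one firing hyperedge, so there is no alternative route. By induction on the least-fixpoint stages of $\Phi$, the closure over $T$ then misses that atom and everything downstream of it, including part of $\Target_q$.

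Since $P$ itself is sufficient, it is the unique inclusion-minimal sufficient subset. The filter's output is minimal by item~3, so it equals $P$ for every pruning order.

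\emph{Cost.} The filter performs one closure per element of $\Ctx_0$. Adding the initial closure that decides $\Target_q\subseteq S$ gives at most $|\Ctx_0|+1$ closure computations. Each is linear by the forward-chaining bound of Lemma~\ref{lem:closure}.
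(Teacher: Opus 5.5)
Your proposal is correct and follows essentially the same route as the paper: monotonicity of the closure for completeness and soundness, the standard deletion-filter argument for minimality, and a set $P$ of unique producers contained in every sufficient subset for uniqueness. The cost count is the same: one closure plus one per pruning test. Your extra induction over the topological search loop in item 1 is harmless but not needed, since (H1) already posits $F_q\subseteq\mathit{pool}$ directly.
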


The proof (Appendix~\ref{app:proofs}) uses only the closure's monotonicity, with minimality as for the deletion filter~\cite{chinneck1991locating,marquessilva2013msmp}.
Below we relax exact linking to accepted links and bound false certificates on twins.

\begin{corollary}[Accepted-link soundness]\label{cor:anchor}
Proposition~\ref{thm:cert}(2) holds for sufficiency with exact linking required only on accepted links. Assume correct requirements, and let a twin delete every unit that states a fact every derivation of $\Target_q$ uses. The certificate fires on it only if an accepted unit stating that fact survives the deletion or the derivation uses a false accepted link. Its false-certificate rate on twins is at most the rate of surviving accepted statements plus that of derivations with a false accepted link.
\end{corollary}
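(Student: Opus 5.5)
The plan is to reduce all three claims to the monotonicity of the closure in Lemma~\ref{lem:closure} and to the way the certificate is built. \method{} computes the backward derivation and the deletion filter over $\facts_{\mathrm{acc}}$ alone. So the returned $\Ctx$ satisfies $\Target_q\subseteq\Cl\big(G_q\cup\facts_{\mathrm{acc}}(\Ctx),\rules(\Ctx)\big)$ by construction.

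\emph{Step 1 (accepted-link soundness).} Exact linking on accepted links means $\facts_{\mathrm{acc}}(u)\subseteq\facts^*(u)$ for every $u\in\Ctx$, and the witnessed hyperedges of $\Ctx$ agree with $\rules^*$. Correct requirements (H2) make the hypergraph the true one. Lemma~\ref{lem:closure} gives monotonicity of $\Cl$ in its first argument. Hence $\Cl$ over the accepted facts is contained in $\Cl\big(G_q\cup\facts^*(\Ctx),\rules^*(\Ctx)\big)$, so $\Ctx$ is sufficient under $(\mathcal R^*,\facts^*,\rules^*)$. No other hypothesis of Proposition~\ref{thm:cert}(2) is used. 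Non-violation is checked explicitly by the algorithm before it answers, and that check is unchanged.

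\emph{Step 2 (twin characterisation).} Let $f$ be the fact that every derivation of $\Target_q$ uses, and suppose the certificate fires on the twin. Then the accepted closure of the twin's pool contains $\Target_q$, so some derivation $d$ over accepted links exists. Split into two cases. Case (a): every accepted link used by $d$ is true. Then $d$ is a true derivation, so it uses $f$. The fact $f$ lies outside $G_q$ and no hyperedge derives it, as in Proposition~\ref{prop:undirected}. So $f$ must be stated by an accepted unit of the twin's pool, and that unit survived the deletion. Case (b): otherwise, $d$ uses a false accepted link. The point to handle is that "every derivation uses $f$" refers to true derivations. I will state this explicitly, so that a derivation mixing in a false link falls under (b) even if it avoids $f$.

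\emph{Step 3 (rate bound).} Let $E_1$ be the event that an accepted statement of $f$ survives, and $E_2$ the event that the firing derivation uses a false accepted link. Step 2 shows that a false certificate on a twin implies $E_1\cup E_2$. A union bound over twins then gives a false-certificate rate of at most $\Pr[E_1]+\Pr[E_2]$.

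The main obstacle is Step 2. The phrase ``a fact every derivation uses'' must be pinned down: it has to mean derivations over true evidence, and $f$ must be underivable by hyperedges. I would fix these readings first, then argue by contraposition that without $E_1$ and $E_2$ the accepted closure stays inside a true closure that lacks $f$, and therefore lacks $\Target_q$.
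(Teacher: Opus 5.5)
Your proposal is correct and follows the paper's proof. Like the paper, you rerun the argument of Proposition~\ref{thm:cert}(2) with exactness required only of accepted links, making the monotonicity step $\facts_{\mathrm{acc}}\subseteq\facts^*$ explicit, and you split a firing derivation on a twin by whether it uses a false accepted link; the rate bound is the union bound the paper leaves implicit. Your aside about non-violation is unnecessary, since the corollary claims only the sufficiency part of (2), and with rival values allowed the accepted certificate performs no violation check.
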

\begin{proof}
The context is computed over accepted links, so every lookup step of the returned derivation is supported by an accepted link of the context. The proof of (2) goes through with exact linking restricted to accepted links. On a twin, with correct requirements the derivation contains the deleted fact's requirement and derives it from some accepted unit. If every accepted link of the derivation is correct, the requirements are instantiated with true values and that unit states the fact, so it is a statement the deletion missed.
\end{proof}

\section{Experimental Setup}
\label{sec:setup}

\paragraph{Web question answering}
We use the pooled Wikipedia corpora of HippoRAG~2~\cite{gutierrez2025hipporag2} for 2WikiMultiHopQA~\cite{ho2020twowiki} and MuSiQue~\cite{trivedi2022musique}: 1{,}000 questions per dataset over 6{,}119 and 11{,}656 passages, in which every question competes with the evidence of all others.
The first 200 questions in their original order form the validation split, used for every choice of prompt, threshold and selective head. The remaining 800 form the test split, touched once per arm.
All arms share one dense retriever (Qwen3-Embedding-0.6B~\cite{qwen3embedding}) and one reader (Qwen3-8B~\cite{qwen3}, no thinking, greedy, served with vLLM~\cite{kwon2023vllm}). We use Qwen3-32B and DeepSeek V4.1 Flash where stated, and HotpotQA~\cite{yang2018hotpotqa} from the same release (twins delete the passage of one of two Gemini-2.5-Flash lookups).

\paragraph{Unanswerable twins}
\looseness=-1 For every question we build a twin whose evidence is deleted from the corpus. Unlike generated unanswerable questions~\cite{peng2025uaeval4rag,liu2026crumqs}, it differs from its original only in that evidence.
On MuSiQue we delete the supporting passages of the hops that MuSiQue's own unanswerable contrast removes~\cite{trivedi2022musique}, so the choice of deletion is not ours. On 2WikiMultiHopQA we delete the gold passage of one lookup hop chosen with a fixed seed.
On both we then search the remaining corpus for any other passage that states the deleted fact, with the gold bridge values and the 32B model as linker on the top 30 hits. We delete these as well (at most three), and questions whose fact is stated more often are dropped.
A question is kept only if neither Qwen3-8B nor Qwen3-32B answers it correctly without context in five samples (a memory filter).
This leaves 529 twins on 2WikiMultiHopQA (427 in test) and 826 on MuSiQue (661). The filter removes most guessable yes/no comparisons of 2WikiMultiHopQA, so conjunctions are tested mainly by bridge comparisons and MuSiQue requirements using two earlier answers.

\paragraph{Exact requirements}
ProofWriter's depth-5 test theories~\cite{tafjord2021proofwriter} and TextWorld logs~\cite{cote2018textworld} produced with AMA-Bench's protocol~\cite{zhao2026amabench} give exact requirement hypergraphs. In 959 episodes (671 for test), a query asks which turns enable an action, and its twin deletes one.

\paragraph{Baselines and controls}
\looseness=-1 Each baseline runs in its default configuration, with its authors' prompts, code or checkpoint, under our retriever, reader and corpus. Only HALT and S2G-RAG deviate: they abstain when their rule never accepts the evidence, the reading more favourable to them (App.~\ref{app:impl} gives every method's settings and the fixed seeds).
\emph{Single pass:} \emph{RAG} on the top-5 passages, \emph{Verbalised sufficiency} (the reader checks each requirement before answering) and \emph{HippoRAG~2}~\cite{gutierrez2025hipporag2} (graph-based retrieval).
\emph{Sufficiency judges:} \emph{Sufficient Context}~\cite{joren2025sufficient}, an autorater label (Qwen3-8B, Qwen3-32B or Gemini-2.5-Flash) combined with $P(\mathrm{True})$~\cite{kadavath2022know} in a selective head. \emph{Abstain-R1}~\cite{zhai2026abstainr1} is a model trained with reinforcement learning to answer or abstain (its 3B checkpoint, closed-book by default).
\emph{Search agents:} \emph{Iterative search}~\cite{press2023selfask,trivedi2023ircot}, \emph{HALT}~\cite{roh2026halt} (a verifier-based stopping rule on that agent) and \emph{S2G-RAG}~\cite{li2026s2g} (sufficiency-and-gap judging after each search). S2G-RAG prompts our reader in place of its distilled judge, and HALT's verifier takes our reader's current answer as its gold. \emph{Search-R1}~\cite{jin2025searchr1}, trained with reinforcement learning (its 7B checkpoint, and its 32B one with Qwen3-32B), never abstains, so a trajectory without an answer counts as an abstention. \emph{LogicRAG}~\cite{chen2026logicrag} orders sub-questions in a dependency graph, and \emph{SR-RAG}~\cite{wang2026srrag} checks each sub-answer with a Prolog prover over extracted triples (reimplemented).
\emph{Controls with \method{}'s inputs:} an LLM judge on \method{}'s pool and requirement list, and an LLM judge on its accepted link table (decompose-and-verify with an LLM decision). The Sufficient Context label on multi-step evidence (\method{}'s pool, or the iterative agent's or Search-R1's evidence) alone decides whether that evidence's answer stands, and Abstain-R1 reads \method{}'s pool.
\method{} uses the same reader in every role and the deployed decomposition, and certifies over accepted links with rival values allowed (no violation rule). There were eight candidate rules per regime: two rival rules, each with no, subject or subject-or-value anchoring, and two that also accept verified links. The 200 validation items per corpus selected subject anchoring or verification on the pooled corpora (decision accuracy \nSelPooled\%) and subject anchoring alone at Web scale (\nSelWeb\%).

\paragraph{Metrics}
\looseness=-1 Correctness is judged by Gemini-2.5-Flash against the reference answers only. DeepSeek V4.1 Flash as a second judge agrees on \nDjTWCert, \nDjMUCert{} and \nDjHOCert\% of \method{}'s answers and leaves the decision comparisons it covers unchanged. Gemini-2.5-Flash also audits links and twins. DeepSeek re-grades the links and audits the HotpotQA twins, and a blind annotator labels a sample.
On answerable questions we report accuracy (Acc, over all originals / over those no reader answers from memory) and accuracy among answered questions (SAcc). Costs are LLM calls (Calls, verification included) and thousands of input tokens (Tok) per question. On originals and twins we report coverage (Cov, share answered), the twins answered (Twin) and those answered with the original's gold answer (Gold), and abstention precision (AbsP). Decision accuracy (Dec) counts a correct answer on an original or an abstention on a twin, and always abstaining scores 50. $\Delta$Dec is \method{} minus a row, $\pm$ its paired-bootstrap standard error. In the result tables the best value is bold, the second best underlined and \method{}'s row shaded. Search-R1 rows use its own checkpoint (7B, or 32B with Qwen3-32B), and other rows use the setting's model in every role. 2W, MQ and HQ are the three corpora. SR-RAG and Abstain-R1 are compared with \method{}'s rebuilt 8B rows (App.~\ref{app:anchor}).
Decoding is greedy. Differences are tested with a paired bootstrap (10{,}000 resamples of pairs) and McNemar tests Holm-corrected over the arms of a setting. In the results, \emph{significantly}, \emph{above} and \emph{below} mean Holm-corrected $p<0.05$, and \emph{level} means not significant, even where the bootstrap alone is significant.

\section{Results}
\label{sec:results}

We ask five questions.
\textbf{RQ1}: how accurate are certified answers on answerable questions?
\textbf{RQ2}: does the certificate decide between answering and abstaining better than holistic sufficiency judges and stopping rules when evidence is missing?
\textbf{RQ3}: what does the certificate deliver beyond the decision, and at what cost?
\textbf{RQ4}: how often do the hypotheses behind the certificate's guarantee hold?
\textbf{RQ5}: with exact requirements, what do pairwise views of the evidence lose?

\begin{table}[t]
\caption{Accuracy and cost on answerable questions (test, 8B). \method{} has the highest SAcc on both corpora, and only HALT and SR-RAG make more LLM calls. $^\dagger$Offline indexing not counted.}
\label{tab:answerable}
\centering\footnotesize
\setlength{\tabcolsep}{1.05pt}
\begin{tabular}{lrrrrrrrr}
\toprule
 & \multicolumn{4}{c}{2WikiMultiHopQA} & \multicolumn{4}{c}{MuSiQue} \\
 & Acc$\uparrow$ & SAcc$\uparrow$ & Calls$\downarrow$ & Tok$\downarrow$ & Acc$\uparrow$ & SAcc$\uparrow$ & Calls$\downarrow$ & Tok$\downarrow$ \\ \midrule
RAG (top-5) & 31.8\,/\,22.5 & 50.2 & \textbf{1} & \textbf{0.9} & 17.9\,/\,16.3 & 41.8 & \textbf{1} & \textbf{0.8} \\
Verbalised sufficiency & 32.0\,/\,21.8 & 53.0 & \underline{2} & \underline{1.6} & 17.1\,/\,14.7 & 41.9 & \underline{2} & \underline{1.6} \\
Iterative search & 51.9\,/\,47.3 & 55.6 & 3 & 2.2 & 34.6\,/\,31.9 & 41.8 & 3 & 2.1 \\
S2G-RAG & 45.0\,/\,40.0 & 71.6 & 5 & 5.6 & 17.0\,/\,15.6 & 59.6 & 7 & 8.2 \\
Search-R1 & \underline{70.2}\,/\,\textbf{66.7} & \underline{72.1} & 4 & 3.4 & \textbf{47.2}\,/\,\textbf{45.7} & 51.9 & 4 & 3.5 \\
LogicRAG & 61.3\,/\,56.2 & 64.1 & 5 & 2.3 & 35.6\,/\,34.9 & 35.9 & 7 & 2.8 \\
SR-RAG & 65.2\,/\,57.4 & 66.2 & 23 & 19.9 & \underline{39.2}\,/\,\underline{36.9} & 40.9 & 23 & 18.3 \\
HippoRAG 2$^\dagger$ & 39.9\,/\,31.6 & 57.6 & \underline{2} & 4.1 & 19.2\,/\,17.9 & 41.5 & \underline{2} & 3.8 \\
HALT & 11.0\,/\,5.9 & 58.7 & 49 & 39.2 & 2.4\,/\,2.7 & \underline{63.3} & 60 & 49.3 \\
\ours{}\method{} & \textbf{73.0}\,/\,\underline{65.6} & \textbf{85.8} & 17 & 8.9 & 30.8\,/\,28.4 & \textbf{74.5} & 22 & 11.1 \\
\bottomrule
\end{tabular}
\end{table}

\subsection{RQ1: Answerable questions}
Answers produced through the requirement hypergraph over accepted links are the most reliable (Table~\ref{tab:answerable}, \nTWAnvSAcc\% and \nMUAnvSAcc\% correct), and \method{} answers the most 2WikiMultiHopQA originals correctly (\nTWAnvAcc\% of all).
MuSiQue often states a fact on a page titled by neither the subject nor the value, and there \method{} trades coverage for precision. It answers \nMUAnvAcc\% of all originals correctly (\nMULenientAcc\% without the entity check). Search-R1, SR-RAG, LogicRAG and the iterative agent answer more (\nMUIterAcc{} to \nMUSrAcc\%) but also answer most twins.
Decomposing from the question alone is level with the annotated decomposition (decision accuracy \nTWCertDec{} and \nMUCertDec\% against \nTWOracleDec{} and \nMUOracleDec\%, both under the violation rule and without the entity check, unlike Table~\ref{tab:selective}'s row).

\emph{Answer:} certified answers are the most reliable of those given.

\begin{table*}[t]
\caption{Answering or abstaining on paired originals and twins (test, 8B, $\pm$ standard error). \method{} is most accurate on 2WikiMultiHopQA and over two standard errors above every first-block baseline.}
\label{tab:selective}
\centering\footnotesize
\setlength{\tabcolsep}{3.95pt}
\begin{tabular}{lrrrrrrrrrrrr}
\toprule
 & \multicolumn{6}{c}{2WikiMultiHopQA} & \multicolumn{6}{c}{MuSiQue} \\
 & Cov & Twin$\downarrow$ & Gold$\downarrow$ & AbsP$\uparrow$ & Dec$\uparrow$ & $\Delta$Dec & Cov & Twin$\downarrow$ & Gold$\downarrow$ & AbsP$\uparrow$ & Dec$\uparrow$ & $\Delta$Dec \\
\midrule\multicolumn{13}{l}{\emph{Baselines, each on its own evidence}} \\
RAG (top-5) & 48.2 & 32.8 & 3.7 & 64.9 & 44.8 & 33.8$\pm$1.8 & 30.4 & 20.7 & 4.2 & 57.0 & 47.8 & 12.6$\pm$1.3 \\
Verbalised sufficiency & 44.7 & 29.0 & 3.5 & 64.2 & 46.4 & 32.3$\pm$1.8 & 26.6 & 16.2 & 2.4 & 57.1 & 49.2 & 11.2$\pm$1.2 \\
Iterative search & 88.5 & 83.1 & 13.6 & 73.5 & 32.1 & 46.6$\pm$1.8 & 74.2 & 67.2 & 10.1 & 63.6 & 32.4 & 28.1$\pm$1.4 \\
S2G-RAG & 42.3 & 22.5 & 4.0 & 67.1 & 58.8 & 19.9$\pm$1.7 & 18.0 & 9.4 & 2.4 & 55.3 & 53.1 & 7.3$\pm$1.1 \\
Search-R1 & 93.9 & 90.4 & 17.6 & 78.8 & 38.2 & 40.5$\pm$1.6 & 86.0 & 81.5 & 15.1 & \underline{65.9} & 32.1 & 28.4$\pm$1.2 \\
LogicRAG & 87.0 & 81.5 & 13.8 & 71.2 & 37.4 & 41.3$\pm$1.7 & 98.6 & 98.2 & 12.4 & \textbf{66.7} & 18.4 & 42.1$\pm$1.2 \\
SR-RAG & 97.5 & 97.2 & 13.1 & 57.1 & 30.1 & 49.1$\pm$1.6 & 96.6 & 97.1 & 8.8 & 42.2 & 19.9 & 40.7$\pm$1.2 \\
HippoRAG 2 & 51.4 & 33.0 & 4.0 & 68.9 & 49.3 & 29.4$\pm$1.8 & 33.4 & 23.3 & 3.8 & 57.5 & 47.3 & 13.2$\pm$1.3 \\
HALT & 6.6 & \textbf{1.9} & \textbf{0.2} & 52.5 & 52.0 & 26.7$\pm$1.4 & 2.4 & \textbf{0.9} & \textbf{0.3} & 50.8 & 50.9 & 9.5$\pm$1.0 \\
Suff.\ Context (Gemini) & 40.0 & 24.8 & 3.3 & 62.7 & 48.4 & 30.3$\pm$1.8 & 27.3 & 18.2 & 4.1 & 56.3 & 48.9 & 11.5$\pm$1.2 \\
Abstain-R1 & 51.8 & 51.8 & 6.1 & 50.0 & 27.2 & 52.0$\pm$1.9 & 20.0 & 20.0 & 2.7 & 50.0 & 41.4 & 19.2$\pm$1.2 \\
\midrule\multicolumn{13}{l}{\emph{Decision rules on multi-step evidence, answer from it}} \\
LLM judge, \method{}'s pool & 62.3 & 39.8 & 5.4 & 79.8 & 52.9 & 25.8$\pm$1.9 & 40.9 & 27.2 & 4.7 & 61.6 & 49.2 & 11.2$\pm$1.2 \\
LLM judge, \method{}'s links & 48.0 & 14.1 & 1.9 & \textbf{82.7} & 75.8 & 2.9$\pm$0.7 & 24.1 & 7.9 & 2.1 & 60.7 & \underline{61.0} & $-$0.6$\pm$0.4 \\
Qwen3-8B label, \method{}'s pool & 33.0 & 6.3 & \underline{1.2} & 69.9 & 63.5 & 15.2$\pm$1.6 & 21.0 & 9.8 & 2.3 & 57.0 & 53.6 & 6.8$\pm$1.0 \\
Gemini label, answer from \method{}'s pool & 37.1 & \underline{2.3} & \underline{1.2} & 77.7 & 69.7 & 9.0$\pm$1.4 & 17.9 & 5.4 & 2.3 & 57.6 & 56.4 & 4.1$\pm$0.9 \\
Gemini label, agent's evidence & 28.9 & 4.0 & 1.9 & 67.5 & 67.3 & 11.4$\pm$1.5 & 18.9 & 6.4 & 2.0 & 57.7 & 57.0 & 3.5$\pm$1.1 \\
S2G-RAG, Gemini judge & 31.3 & 5.2 & 1.4 & 69.0 & 69.2 & 9.5$\pm$1.4 & 13.3 & \underline{2.4} & \underline{0.6} & 56.3 & 57.6 & 2.8$\pm$1.0 \\
Qwen3-8B label, Search-R1's evidence & 38.3 & 10.3 & 4.0 & 72.7 & 71.0 & 7.7$\pm$1.6 & 28.9 & 13.2 & 4.1 & 61.1 & 57.6 & 2.9$\pm$1.1 \\
Gemini label, Search-R1's evidence & 40.5 & 5.2 & 2.6 & 79.7 & \underline{78.0} & 0.7$\pm$1.3 & 27.9 & 8.8 & 3.8 & 63.3 & \textbf{62.1} & $-$1.7$\pm$1.0 \\
\midrule
\ours{}\method{} & 44.3 & 8.2 & 1.6 & \underline{82.4} & \textbf{78.7} &  & 22.8 & 7.6 & 2.1 & 59.8 & 60.4 &  \\
\bottomrule
\end{tabular}
\end{table*}

\begin{table*}[t]
\caption{Decision accuracy (\%) in further settings (test pairs, DeepSeek only on those not answerable from memory, Web only on verified pairs). \method{} is best in 6 of the 13 columns, and no row is significantly above it.}
\label{tab:summary}
\centering\footnotesize
\setlength{\tabcolsep}{2.4pt}
\begin{tabularx}{0.8598\textwidth}{l*{13}{>{\raggedleft\arraybackslash}X}}
\toprule
Dec$\uparrow$ & 8B & \multicolumn{3}{c}{Qwen3-32B} & \multicolumn{3}{c}{DeepSeek V4.1 Flash} & \multicolumn{3}{c}{Web, 8B} & \multicolumn{3}{c}{Web, 32B} \\
 & HQ & 2W & MQ & HQ & 2W & MQ & HQ & 2W & MQ & HQ & 2W & MQ & HQ \\ \midrule
RAG (top-5) & 53.0 & 53.0 & 49.8 & 58.5 & 54.5 & 53.1 & 63.1 & 41.6 & 44.0 & 43.2 & 46.0 & 45.1 & 45.2 \\
Iterative search & 38.0 & 57.0 & 46.6 & 53.9 & 73.7 & 63.3 & 70.4 & 23.5 & 22.0 & 28.3 & 39.8 & 27.7 & 39.8 \\
S2G-RAG & 62.5 & 61.5 & 55.2 & 68.9 & 74.3 & 59.9 & 75.3 & 50.4 & 47.4 & 48.4 & 55.8 & 49.8 & 58.4 \\
Search-R1 & 40.1 & 38.9 & 25.7 & 37.4 & 38.5 & 30.8 & 38.0 & 30.1 & 21.2 & 30.9 & 34.5 & 16.2 & 30.9 \\
Suff.\ Context (Gemini) & 61.3 & 54.8 & 51.0 & 64.5 & 54.5 & 53.1 & 65.2 & 42.5 & 45.6 & 48.0 & 49.6 & 46.1 & 48.8 \\
Gemini label, top-5 & 64.2 & 61.2 & 54.4 & 66.5 & 60.0 & 53.9 & 64.6 & 49.6 & \textbf{50.3} & 50.4 & 52.7 & 49.8 & 50.8 \\
LLM judge, \method{}'s links & 71.2 & 77.6 & \underline{64.1} & 73.5 & \underline{83.1} & \underline{65.1} & \textbf{78.0} & 58.8 & 48.4 & 55.2 & 58.4 & 47.6 & 56.2 \\
Gemini label, \method{}'s pool & \underline{71.9} & 75.8 & 62.8 & \textbf{77.1} & 78.2 & 60.3 & 77.4 & 59.7 & 49.7 & 55.2 & \underline{65.5} & \underline{51.1} & \underline{60.0} \\
Qwen3-8B label, Search-R1's evidence & 69.1 & 72.8 & 57.8 & 71.5 & 70.7 & 56.5 & 68.9 & 59.3 & 47.4 & 54.2 & 60.2 & 46.0 & 53.8 \\
Gemini label, Search-R1's evidence & \textbf{73.8} & \textbf{81.4} & 62.7 & \underline{75.6} & 78.4 & 61.2 & 72.8 & \textbf{64.2} & 48.5 & \underline{58.2} & \textbf{69.5} & 48.5 & 56.6 \\
\midrule
\ours{}\method{} & 71.6 & \underline{81.3} & \textbf{64.2} & 75.4 & \textbf{84.4} & \textbf{65.4} & \underline{77.9} & \underline{63.7} & \underline{50.0} & \textbf{58.6} & \underline{65.5} & \textbf{51.5} & \textbf{60.6} \\
\bottomrule
\end{tabularx}
\end{table*}

\subsection{RQ2: Answering or abstaining}
\label{sec:rq2}
Table~\ref{tab:selective} pairs every answerable question that passed the memory filter with its unanswerable twin.
The entity check cuts the twins \method{} answers from \nTWLenientTwin\% and \nMULenientTwin\% (no check, rival values allowed) to \nTWAnvTwin\% and \nMUAnvTwin\%, and decision accuracy rises by \nVTWLenientDDec{} and \nVMULenientDDec{} points, and significantly by \nVPooledLenientDDec{} pooled over the three corpora.
No baseline answers more answerable questions correctly while answering fewer twins than \method{}. Each decides significantly less accurately on every corpus and pooled over the three (S2G-RAG by \nVPooledStwogDDec{} points pooled). Of the \nTableTwoPub{} baselines, \nBelowHalfTW{} score below always abstaining (50) on both 2WikiMultiHopQA and MuSiQue, so the controls below, which abstain too, are the closer comparisons. Abstain-R1, trained to abstain, scores \nAbPooledClosedDec\% pooled over the three corpora, closed-book, and stays below \method{} on \method{}'s own pool on every corpus (\nAbPooledPoolD{} pooled).
With Gemini-2.5-Flash as its judge, S2G-RAG is below \method{} on 2WikiMultiHopQA (\nVTWStwogGemDDec) and pooled (\nVPooledStwogGemDDec), and level on MuSiQue and HotpotQA.
These margins assume half of the queries are unanswerable. \method{} beats Search-R1 in its default configuration once at least \nPrevVTWSr\%, \nPrevVMUSr\% and \nPrevVHOSr\% are. It stays ahead of the Gemini label on its pool unless \nPrevGpMin{} to \nPrevGpMax\% are.

\paragraph{What the evidence and the closure each contribute}
Three controls separate the evidence from the decision rule. The same LLM decides at \nTWSameDec{} and \nMUSameDec\% given \method{}'s pool and requirement list and at \nTWLinksAnvDec{} and \nMULinksAnvDec\% given its accepted table of the values linked to each requirement (decompose-and-verify with an LLM decision~\cite{press2023selfask,khot2023decomposed,xu2024searchchain}).
The closure, deciding that table without a model call, is above it on 2WikiMultiHopQA (\nVTWLinksAnvDDec) and level on MuSiQue and HotpotQA (\nVMULinksAnvDDec{} and \nVHOLinksAnvDDec{} points).
The Gemini Sufficient Context label on \method{}'s pool, which abstains on almost every twin, is below \method{} on 2WikiMultiHopQA, MuSiQue and pooled (\nVPooledPipeGemDDec{} pooled) and level on HotpotQA. The label of Qwen3-8B, whose links the closure reads, is below \method{} by \nVTWPipeSmallDDec{} and \nVMUPipeSmallDDec{} points. Without a model call, the closure decides as accurately as the Gemini label on the same whole-pool answers (level, \nSaPooledDDec{} pooled, \nSaRPooledDDec{} with 32B and \nSaDSPooledDDec{} with DeepSeek). The Gemini margin thus comes from the answer. Conversely, with the label deciding and the certificate answering, decision accuracy is level with \method{}'s (\nHyPooledD{} pooled with 8B).
The margin over the baselines therefore comes from the closure's certified context, the search directed by the requirement hypergraph, and per-requirement linking (RQ3).
The answering model matters as much. Search-R1's answers, kept only when the Gemini label accepts the passages it retrieved, are level with \method{} on every corpus (\nVPooledSrLabDDec{} pooled, and above it pooled with one of two other example sets, \nDvTwoSrLabD). Under the Qwen3-8B label they are below it pooled (\nVPooledSrLabSmallDDec), as is Search-R1's 32B checkpoint under the Qwen3-32B label (\nQtwoD).
With Qwen3-32B in every role (Table~\ref{tab:summary}) every baseline, Search-R1's 32B checkpoint included, stays significantly below \method{} on every corpus and pooled, and that checkpoint under the Gemini label is level (\nVRPooledSrBigLabDDec{} pooled). With DeepSeek, \method{} is significantly above every baseline and Search-R1 under the Gemini label (\nVDSPooledSrLabDDec{} pooled).

\paragraph{Bounding false certificates}
With the entity check \method{} answers \nTWAnvTwin\% and \nMUAnvTwin\% of the twins (\nTWCertTwin\% and \nMUCertTwin\% with the violation rule), and \nTWAnvTwinGold{} and \nMUAnvTwinGold{} points carry the original gold answer.
These are bounded by surviving statements plus accepted false links. A BM25 and full-corpus string search flags \nLeakTW\% and \nLeakMU\% of the twins. Auditing all flagged twins and 100 unflagged ones, Gemini-2.5-Flash confirms a surviving statement for an estimated \nAudTW\% and \nAudMU\%. On pairs whose twin has no confirmed statement, every baseline stays significantly below \method{} (pooled \nCaPubMin{} to \nCaPubMax). Without the check, most false certificates come from links to related-entity pages, a failure of exact linking (H3) that the check targets, while the closure stays exact. With it, a false link, not a surviving statement, explains most answered twins. Gemini grades a link incorrect in \nFlTWTwin, \nFlMUTwin{} and \nFlHOTwin\% of them and in \nFlTWOrig, \nFlMUOrig{} and \nFlHOOrig\% of accepted originals (App.~\ref{app:anchor}).
\emph{Answer:} the certificate decides more accurately than every baseline, as accurately as a Gemini-2.5-Flash label on the same answers, and more accurately than Search-R1 under a label from its own model.

\subsection{RQ3: What the certificate delivers}
\label{sec:cert}
\paragraph{Context and abstention}
The reader's gain in Table~\ref{tab:certificate} is significant with every model. Its passages-only row uses the \nNdN{} pooled answers without a rival value (\nNdB\% from the whole pool). The named requirement is an open requirement whose references are resolved, or the failing lookup. It is counted on traceable twins, the \nLocShare\% of twin abstentions whose original linked the deleted passages and whose plan has two or more lookups, one intact.

\begin{table}[t]
\caption{What the certificate delivers, against the alternative on the same items (test, 8B unless stated, better value in bold). Reader: right on \method{}'s answered originals. Complex: three or more lookups or a join.}
\label{tab:certificate}
\centering\footnotesize
\setlength{\tabcolsep}{2.5pt}
\begin{tabular}{lrr}
\toprule
 & Certificate & Alternative \\ \midrule
\multicolumn{3}{l}{\emph{Context and abstention}} \\
\quad Reader, 8B / 32B / DeepSeek (\%) $\uparrow$ & \textbf{\nAsPooledCert} / \textbf{\nAsRPooledCert} / \textbf{\nAsDSPooledCert} & \nAsPooledPool{} / \nAsRPooledPool{} / \nAsDSPooledPool{} (pool) \\
\quad Reader, passages only (\%) $\uparrow$ & \textbf{\nNdA} & \nNdC{} (RAG, same size) \\
\quad Names the deleted requirement (\%) $\uparrow$ & \textbf{\nLocHit} & \nLocChance{} (at random) \\
\multicolumn{3}{l}{\emph{Decision on the same links, with no model call}} \\
\quad Originals right, only it answers $\uparrow$ & \textbf{\nSwClosOk{} of \nSwClosN} & \nSwOrigOk{} of \nSwOrigRest{} (step-wise) \\
\quad Twins answered, only it answers $\downarrow$ & \textbf{\nSwClosTwin} & \nSwTwinAns{} (step-wise) \\
\multicolumn{3}{l}{\emph{Complex questions, \nHxShare\% of pairs (\%)}} \\
\quad Originals right when answered $\uparrow$ & \textbf{\nHxCertSAcc} & \nHxSrSAcc{} (Search-R1) \\
\quad Twins answered $\downarrow$ & \textbf{\nHxCertTwin} & \nHxSrTwin{} (Search-R1) \\
\bottomrule
\end{tabular}
\end{table}

\paragraph{The decision}
With decomposition, search and accepted links fixed, the closure decides without a model call. On chains of lookups it is the step-wise rule ``every lookup has an accepted link''~\cite{trivedi2022musique}, and they differ on \nSwDisMin{} to \nSwDisMax\% of all decisions. Beyond chains it answers only when the target itself is derived. Where only the step-wise rule answers, a reasoning requirement has no value, and where only the closure answers, a lookup its derivation does not need is open (Table~\ref{tab:certificate}). The closure leads an LLM on the same links most on 2WikiMultiHopQA kinship chains (\nBTTWInfJudge{} on \nBTTWInfPairs{} pairs).

\paragraph{Complex questions}
On the \nHxN{} complex pairs, \method{} stays significantly above every baseline (by \nHxMin{} to \nHxMax{} points), carried by twin abstentions and accurate answers.

\paragraph{Cost}
Certification costs mostly linking calls, each judging one passage for one requirement (\nTWAnvCalls{} and \nMUAnvCalls{} LLM calls per question against Search-R1's \nTWSrCalls, verification under one). Only HALT and SR-RAG use more calls or tokens.
\emph{Answer:} without a model call, the certificate decides, makes the reader more accurate and mostly names the deleted requirement.

\subsection{RQ4: The hypotheses in practice}
\label{sec:hyp}
\begin{table}[t]
\caption{Precision of \method{}'s positive links by acceptance class (test, 8B, up to 150 links per class and all \nAudVerTWN{} verified 2WikiMultiHopQA links graded by Gemini-2.5-Flash, correct/graded in parentheses). Anchored links are the most precise in every row (bold). N/A: the Web rule has no verification step.}
\label{tab:audit}
\centering\footnotesize
\setlength{\tabcolsep}{1.6pt}
\begin{tabular}{llrrrr}
\toprule
 & & Accepted & \multicolumn{3}{c}{Precision (\%)$\uparrow$} \\
 & & share (\%) & anchored & verified & rejected \\ \midrule
Pooled & 2WikiMultiHopQA & 77 & \textbf{96.7 (145/150)} & 43.2 (38/88) & 20.0 (30/150) \\
 & MuSiQue & 50 & \textbf{84.7 (127/150)} & 74.0 (111/150) & 54.7 (82/150) \\
 & HotpotQA & 61 & \textbf{82.7 (124/150)} & 74.7 (112/150) & 47.3 (71/150) \\
\midrule
Web & 2WikiMultiHopQA & 56 & \textbf{76.7 (115/150)} & N/A & 44.7 (67/150) \\
 & MuSiQue & 44 & \textbf{79.3 (119/150)} & N/A & 54.0 (81/150) \\
 & HotpotQA & 60 & \textbf{74.0 (111/150)} & N/A & 50.7 (76/150) \\
\bottomrule
\end{tabular}
\end{table}

\looseness=-1 Gold retrieval (H1) holds for \nCauHoneTW\% and \nCauHoneMU\% of the questions, and the accepted certificate fires for \nCauFireTW\% and \nCauFireMU\% of these.
In an audit of up to 150 links per class, anchored links are far more often correct than rejected ones on every corpus and at both scales (Table~\ref{tab:audit}, \nAudVAncTW\% against \nAudVUnTW\% on 2WikiMultiHopQA).
A blind annotator (an author) confirms this on \nHuLinks{} links (anchored \nHuAnc\%, rejected \nHuOth\% correct) and agrees with Gemini-2.5-Flash on \nHuGem\% ($\kappa=\nHuGemK$).
\emph{Answer:} gold retrieval holds for \nCauHoneTW\% and \nCauHoneMU\% of questions, and exact linking far more often on anchored than on rejected links.

\subsection{RQ5: Exact requirements}
\label{sec:rq4}
On 959 TextWorld episodes, the certificate recovers every enabling set and enables no twin. Pairwise rules fail on the test episodes, as directed blindness predicts. A prover with every pairwise link but no conjunction recovers 58.6\% at depth two or more, and reachability in timed and untimed projections enables 17.2\% and 84.2\% of twins. An LLM on the same tables falls from \nELBigOneDec\% at depth one to \nELBigDeepDec\% at depth ${\ge}4$ (32B). In ProofWriter several rules can yield a fact. A step-wise plan fixing one rule per requirement before reading the facts misses \nPwMiss\% of provable questions (\nPwMissDeep\% at depth 5, \nPwRandMin{} to \nPwRandMax\% with random rule orders), and the closure none. \emph{Answer:} the closure recovers every enabling set and provable answer that pairwise views, committed plans and an LLM on the same tables miss.

\subsection{Web-scale retrieval: anchoring the links}
\label{sec:web}
\looseness=-1 We replace the pooled corpora by the 21M-passage 2018 Wikipedia dump of DPR~\cite{karpukhin2020dpr}.
A twin deletes every passage among the top 50 hits that states the deleted fact, leaving \nWTWPairs, \nWMUPairs{} and \nWHOTwins{} test pairs.
As the 2018 snapshot lacks the gold support of many questions, we report the \nWVTWPairs, \nWVMUPairs{} and \nWHOPairs{} gold-verified pairs with Qwen3-8B and -32B in every role (Table~\ref{tab:summary}). In these, every lookup has a passage titled by its gold page that states its gold value.
Unanchored, \method{} stays near always abstaining (\nWCertSTWDec, \nWCertSMUDec{} and \nWCertSHODec\% with 8B). In most of the twins it answers at 32B over all pairs, the linker accepts a passage about another entity at the deleted hop, often a near-namesake.
Subject anchoring removes most such links and raises decision accuracy in all six cells, by \nWGainSTW, \nWGainSMU{} and \nWGainSHO{} with 8B and by \nWGainBTW, \nWGainBMU{} and \nWGainBHO{} with 32B (Table~\ref{tab:summary}). On all pairs the gain is \nWaGainMin{} to \nWaGainMax{} points.
Pooled, anchored \method{} is significantly above every baseline with 8B and all but S2G-RAG with 32B, and level with Search-R1 under the Gemini label (\nAWSPooledSrLabDDec{} at 8B, \nAWBPooledSrBigLabDDec{} at 32B).
Without the twins that keep a surviving statement, which no linking rule can remove (at least \nWAudTW, \nWAudMU{} and \nWAudHO\% of all twins), it decides \nWCTWAncDec, \nWCMUAncDec{} and \nWCHOAncDec\% (8B).

\section{Conclusion}
\label{sec:conclusion}

We cast the decision to answer, search again or abstain as evidence sufficiency over a per-query requirement hypergraph.
Its Horn closure decides without a model call, returns a minimal sufficient context with its derivation, and names the open requirements that direct the next search.
When retrieval, requirements and links are correct, this decision is sound and complete up to the entity check.
Where requirements are exact, as in agent logs and rule-based theories, the closure recovers every derivation, whereas rules reading only pairwise projections provably miss conjunctions and plans committed in advance miss alternatives.
On three pooled corpora \method{} decides significantly more accurately than search agents, graph retrieval and sufficiency judges in default configurations.
On the same evidence its closure matches a fast proprietary model's sufficiency label, and its certificate makes the reader more accurate.
At Web scale, anchoring links to entity titles raises its decision accuracy on every corpus and model size.\label{body:end}

\bibliographystyle{ACM-Reference-Format}
\bibliography{refs_compact}

\appendix
\section{Proofs}
\label{app:proofs}

\paragraph{Lemma~\ref{lem:closure}.}
If $X\subseteq X'$, $S\subseteq S'$ and $E\subseteq E'$, every head added by $\Phi_{S,E}(X)$ is added by $\Phi_{S',E'}(X')$, because $\Req(r)\subseteq X$ implies $\Req(r)\subseteq X'$ and $r\in E$ implies $r\in E'$.
On the finite lattice $2^{\Atoms_q}$ the least fixpoint of a monotone map exists and equals $\bigcup_{i\ge0}\Phi^i_{S,E}(\emptyset)$ (Knaster--Tarski). Ordered maps have ordered least fixpoints, so $\Cl$ is monotone in $S$ and $E$.
$S\subseteq\Phi_{S,E}(X)$ for every $X$ gives extensivity. $\Cl(S,E)$ being a fixpoint of $\Phi_{\Cl(S,E),E}$ gives idempotence.
\looseness=-1 Sufficiency is monotone in the context as $\facts(\Ctx)$ and $\rules(\Ctx)$ are.

\paragraph{Proposition~\ref{thm:cert}.}
(1) By (H1), $F_q\subseteq\mathit{pool}$, so monotonicity gives
\begin{multline*}\Target_q\subseteq\Cl\big(G_q\cup\facts^*(F_q),\rules^*(F_q)\big)\\ \subseteq\Cl\big(G_q\cup\facts(\mathit{pool}),\rules(\mathit{pool})\big),\end{multline*}
using (H2) and (H3) to replace the true maps by the computed ones on the pool. So the closure test succeeds, and only a violation or a lookup link of the derivation that fails the entity check (Algorithm~\ref{alg:certify}) can lead to abstention.
(2) \method{} answers only if $\Ctx\subseteq\mathit{pool}$ passes the sufficiency test. Under (H2)--(H3) the test computes the true closure, and as each lookup requirement has one true value no rival is linked, so no hyperedge of the derivation is violated under either rule.
(3) Let $\Ctx_0$ be the backward-derivation set, sufficient by construction, and $u_1,\dots,u_m$ the pruning order. $\Ctx_i=\Ctx_{i-1}\setminus\{u_i\}$ if that set is sufficient, else $\Ctx_i=\Ctx_{i-1}$, so every $\Ctx_i$ is sufficient. If $u_i\in\Ctx_m$ it was kept, so $\Ctx_{i-1}\setminus\{u_i\}$ and, by monotonicity, its subset $\Ctx_m\setminus\{u_i\}$ are insufficient. Every proper subset of $\Ctx_m$ lies in some $\Ctx_m\setminus\{u\}$.
(4) Under the single-derivation condition, every sufficient subset of the pool derives $\Target_q$ by that derivation, since a hyperedge that fires in a subset fires in the pool (Lemma~\ref{lem:closure}) and no other fires there. So it contains the set $P$ of the unique producers of its stated atoms and witnessed hyperedges, and $P$ is sufficient, hence the unique minimal sufficient subset, which pruning returns by (3).
(5) The forward closure, which yields $\Ctx_0$, is one computation and each pruning test one more. \hfill$\square$

\paragraph{Proposition~\ref{prop:directed}.}
$\Proj(\tau\Rules)=\Proj(\Rules)$ and $\tau$ changes nothing else that $D$ reads, so $D(\mathcal M)=D(\tau\mathcal M)$.
For a pair whose true labels differ, $D$ is wrong on exactly one member. Under the balanced mixture each member has probability one half, so the error is at least half the probability that the labels differ. \hfill$\square$

\paragraph{Proposition~\ref{prop:undirected}.}
$\Sec(\mathbf B)$ is the union over units of the pairs inside each unit. Every pair inside $u^\star$ lies inside another unit, so removing $u^\star$ removes no pair. A rule reading only the query, entity contents and $\Sec(\mathbf B)$ thus sees identical inputs on both collections. If $u^\star$ is the only unit stating an atom as in the proposition, removing it makes the targets underivable, so the true labels differ, and the two-point argument of Proposition~\ref{prop:directed} applies. \hfill$\square$

\section{Implementation Details}
\label{app:impl}

\paragraph{Search, linking and decomposition}
\looseness=-1 For each lookup requirement, \method{} retrieves up to three rounds of 5, 5 and 10 new passages, queried by the instantiated requirement, until one passage states an answer.
The linker sees one passage and one instantiated requirement and returns a short answer or \texttt{NONE}, accepting explicit or directly implied answers and rejecting passages about another entity.
One reader call with four in-context examples builds the hypergraph. Two other example sets change \method{}'s decision accuracy by \nDvOneShift{} and \nDvTwoShift{} pooled. Every baseline, the Gemini label on its pool and Search-R1 under the Qwen3-8B label stay significantly below it.

\paragraph{Baselines}
\looseness=-1 RAG tells the reader to answer \texttt{unanswerable} if the documents are insufficient. Sufficient Context uses its published one-shot prompt. The label and the reader's $P(\mathrm{True})$ enter a logistic head fitted on validation to the coverage of the unanchored certificate with the violation rule. The iterative agent alternates search queries and a final answer, may answer \texttt{unanswerable}, and stops before its limit of four searches on \nIterStopMin{} to \nIterStopMax\% of questions. With one search round of five passages per requirement, \method{} links \nBudTWLinks, \nBudMULinks{} and \nBudHOLinks{} passages per question (default \nBudTWLinksDef, \nBudMULinksDef{} and \nBudHOLinksDef) and decides as accurately (\nBudTWDec, \nBudMUDec{} and \nBudHODec\%, fresh run). With DeepSeek V4.1 Flash as judge (agreement \nDjMin--\nDjMax\% for other methods with 50+ answers), every baseline stays significantly below it (pooled from \nDjPubMin). Under it, S2G-RAG with the Gemini judge and Search-R1 under the Qwen3-8B label stay below (\nDjStwogGemD, \nDjSrSmD) and Search-R1 under the Gemini label level (\nDjSrLabD).
HALT uses the authors' claim prompt and released LoRA verifier. Answering at budget end by default, HALT and S2G-RAG fall further behind (HALT pooled \nVPooledHaltADDec; S2G-RAG \nVTWStwogAllDDec, \nVMUStwogAllDDec).
S2G-RAG runs four six-passage turns with its evidence selector (its distilled judge is not public).
Search-R1 runs the authors' Qwen2.5-7B PPO checkpoint and, with Qwen3-32B, their only released 32B one (Qwen2.5-32B GRPO). Both are trained on Natural Questions and HotpotQA training questions (ours: validation) and run with their settings and our retriever. Without the \nSrMemTWKnown{}, \nSrMemMUKnown{} and \nSrMemHOKnown{} paired originals its model answers from memory, \method{} still leads by \nSrMemTWD{}, \nSrMemMUD{} and \nSrMemHOD{} points.
Abstain-R1 runs the authors' checkpoint and prompt (greedy, at most 4{,}096 tokens). A boxed ``I don't know'' or none abstains. On \method{}'s pool the passages precede the question, an input it was not trained on, and where its closed-book reading misses the original, \method{} still leads by \nAbMemPooledPoolD{} (pool) and \nAbMemPooledRagD{} (top-5). HippoRAG~2 runs the authors' code (personalised PageRank over a Qwen3-8B entity graph) without a twin's deleted passages.
LogicRAG runs the authors' code and prompts ($k=3$, at most five rounds). Its answer step reads an LLM-written summary that can carry facts from memory.
SR-RAG (no public code) runs its paper's prompts and settings (up to three sub-queries of three passages, one feedback round), with our few-shot examples, fallback prompt and graph update. Counting its answers that say the documents lack the answer as abstentions, \method{} still leads it by \nSrgRefPooledD{} pooled.

\paragraph{Unanswerable twins}
\looseness=-1 Of 1{,}000 questions each, 24 (2WikiMultiHopQA), 54 (MuSiQue) and 239 (HotpotQA) yield no valid twin, and 447, 120 and 274 of the rest are answered from memory. The deleted 2WikiMultiHopQA and HotpotQA hop is drawn per question with a fixed seed (with a second seed no method changes rank), and every bootstrap uses a fixed seed.
The audit asks Gemini-2.5-Flash (DeepSeek on HotpotQA, whose hops Gemini wrote) whether flagged or dense top-10 passages state the deleted fact.
Deleting 1--4 passages leaves every reader context the same size, and a length-only classifier tells originals from twins at chance (AUROC 0.50--0.52)~\cite{mondal2026before}.

\paragraph{HotpotQA and Web scale}
On HotpotQA, HippoRAG~2, LogicRAG, SR-RAG, HALT, Verbalised sufficiency and Abstain-R1 decide significantly less accurately than \method{}, by \nHqRestMin{} to \nHqRestMax{} points. On all Web pairs, anchoring raises \method{} by \nWaGainMin{} to \nWaGainMax{} points, numerically at or above every baseline.
\looseness=-1 Each HotpotQA lookup by Gemini-2.5-Flash is tied to a supporting passage. A question is kept only if each hop's answer occurs in its passage and a bridge's second lookup uses the first and yields the gold answer (779 of 1{,}000; \nHOPairs{} test pairs). These lookups define the twins and the oracle decomposition.
Web twins delete \nWTWDeleted, \nWMUDeleted{} and \nWHODeleted{} passages on average (facts stated 9+ times dropped). Search-R1's 32B Web rows use a same-settings rebuilt index (recall@10 \nWfRecall{} vs.\ exact, first \nWfRecallOrig).

\section{Entity Anchoring and Verification}
\label{app:anchor}

\paragraph{Definition}
\looseness=-1 Write $W(x)$ for the set of lower-cased word tokens (maximal runs of letters and digits) of a string $x$ that are longer than one character and not among \emph{of, de, the, bin, von, van, la, le, del, da, di, and, du, al, el, y}. Write $t(u)$ for the title of unit $u$ with its disambiguating parenthetical removed.
A unit $u$ linked to the instantiated lookup requirement $q_k$ (of $s_k$) with value $v$ is \emph{subject-anchored} if $\emptyset\neq W(t(u))\subseteq W(q_k)$, and \emph{value-anchored} if $W(t(u))\neq\emptyset$ and $W(t(u))\subseteq W(v)$ or $\emptyset\neq W(v)\subseteq W(t(u))$.
A lookup requirement $k$ of the derivation whose chosen atom $a_k(v_k)$ has no subject-anchored supporting unit is \emph{verified} if the reader, given $q_k$, $v_k$ and the title and text of one of the first two units supporting $a_k(v_k)$ (in linking order), answers Yes to whether the passage states that $v_k$ answers $q_k$ for exactly the entity $q_k$ names, which accepts that unit's link only.
The accepted certificate is Algorithm~\ref{alg:certify}'s certificate with rival values allowed, accepted only if every lookup requirement $k$ of the derivation is anchored or, on the pooled corpora, verified. Otherwise \method{} abstains and names requirement $k$. Its context uses accepted links only: a unit stating facts of two steps but accepted for one supports only that step.
\looseness=-1 As later hops are instantiated with earlier values, a subject anchor at a hop with $\mathrm{ref}(k)\neq\emptyset$ checks the earlier bridge entity, so anchoring covers the derivation. At Web scale verification admits near-namesakes (43 of 65 twin steps).

\paragraph{Link audit}
\looseness=-1 The false-link rates of Section~\ref{sec:rq2} grade every supporting link of 150 accepted originals per corpus and of all answered twins. Of the answered twins that the surviving-statement audit covers, all but \nFlTWNeither, \nFlMUNeither{} and \nFlHONeither{} (of \nFlTWAud, \nFlMUAud{} and \nFlHOAud) have a false link or a surviving statement. Grading the same links, DeepSeek V4.1 Flash keeps anchored links more precise than rejected ones on every corpus and finds fewer false derivations.

\paragraph{Human audit}
\looseness=-1 The annotator labelled \nHuLinks{} of the 270 links drawn with a fixed seed and \nHuTwins{} twins (25 flagged and 25 unflagged per corpus), blind to the class and the model verdicts. Verified links are correct on \nHuVer\%. DeepSeek agrees on \nHuDs\% ($\kappa=\nHuDsK$). Gemini mostly differs by rejecting links the annotator accepts (\nHuGemNo{} against \nHuGemYes). Over all twins, the annotator's estimated surviving-statement rates are \nHuLeakTW{} and \nHuLeakMU\% (Gemini \nAudTW{} and \nAudMU\%).

\paragraph{Further settings}
\looseness=-1 \method{}'s per-item 8B rows, lost with the original machines, were rebuilt from the stored links by re-running the identity checks. At most four decisions and seven answers per corpus changed (Dec \nRbTWDec{} and \nRbMUDec{} on 2WikiMultiHopQA and MuSiQue). They enter only differences and RQ3's step-wise comparison.
Without pairs whose twin keeps a confirmed surviving statement (\nCaTWDropped, \nCaMUDropped, \nCaHODropped), or any flagged one, every baseline stays significantly below \method{} on every corpus (pooled \nCaPubMin{} to \nCaPubMax). So does the Gemini label on its pool (\nCaLab), while Search-R1 under it stays level (\nCaSrLab). On the audited-clean Web twins, anchored \method{} is above every baseline in all six cells (by \nWCPubMin{} to \nWCPubMax).

\paragraph{A complex case}
\looseness=-1 For \emph{Which film has the director born later, Je Fais Le Mort or Red Roses, Red Lips, Red Wine?} (four lookups and a comparison, Figure~\ref{fig:framework}), \method{} certifies Je Fais Le Mort from four passages: its director Jean-Paul Salom\'e was born on 14 September 1960, Paul Martin on 8 February 1899. On the twin it abstains and names ``When was Jean-Paul Salom\'e born?'', while Search-R1, LogicRAG and S2G-RAG answer it and Search-R1 and LogicRAG also miss the original. The case is drawn by a fixed rule from the \nHxCases{} four-lookup pairs that \method{} decides correctly and two or more agents miss.

\clearpage
\end{document}